\title{Density Approximation for Moving Groups} %TODO Please add
\author{Max van Mulken}{TU Eindhoven, the Netherlands}{m.j.m.v.mulken@tue.nl}{https://orcid.org/0000-0001-6609-2057}{}
\author{Bettina Speckmann}{TU Eindhoven, the Netherlands}{b.speckmann@tue.nl}{https://orcid.org/0000-0002-8514-7858}{}%TODO mandatory, please use full name; only 1 author per \author macro; first two parameters are mandatory, other parameters can be empty. Please provide at least the name of the affiliation and the country. The full address is optional. Use additional curly braces to indicate the correct name splitting when the last name consists of multiple name parts.
\author{Kevin Verbeek}{TU Eindhoven, the Netherlands}{k.a.b.verbeek@tue.nl}{https://orcid.org/0000-0003-3052-4844}{}
\authorrunning{M. van Mulken, B. Speckmann and K. Verbeek} %TODO mandatory. First: Use abbreviated first/middle names. Second (only in severe cases): Use first author plus 'et al.'
\keywords{Group density, Quadtrees, Kinetic data structure, Topological persistence} %TODO mandatory; please add comma-separated list of keywords
\newcommand{\etal}{\textit{et al.}\xspace}
\DeclareMathOperator{\poly}{poly}
\newcommand{\KDE}{\text{KDE}}
\newcommand{\WT}{\widetilde{T}}
\newcommand{\Reals}{\mathbb{R}}
\newcommand{\dom}{\mathcal{D}}
\newcommand{\es}{\varepsilon_{\text{dsc}}}
\newcommand{\ec}{\varepsilon_{\text{cor}}}
\newcommand{\SN}{\mathcal{M}}
\begin{document}

\maketitle

%TODO mandatory: add short abstract of the document
\begin{abstract}
Sets of moving entities can form groups which travel together for significant amounts of time. Tracking such groups is an important analysis task in a variety of areas, such as wildlife ecology, urban transport, or sports analysis. Correspondingly, recent years have seen a multitude of algorithms to identify and track meaningful groups in sets of moving entities. However, not only the mere existence of one or more groups is an important fact to discover; in many application areas the actual shape of the group carries meaning as well. In this paper we initiate the algorithmic study of the shape of a moving group. We use kernel density estimation to model the density within a group and show how to efficiently maintain an approximation of this density description over time. Furthermore, we track persistent maxima which give a meaningful first idea of the time-varying shape of the group. By combining several approximation techniques, we obtain a kinetic data structure that can approximately track persistent maxima efficiently.
\end{abstract}

\section{Introduction}\label{sec:introduction}

Devices that track the movement of humans, animals, or inanimate objects are ubiquitous and produce significant amounts of data. Naturally this wealth of data has given rise to a large number of analysis tools and techniques which aim to extract patterns, and ultimately knowledge from said data. One of the most important patterns formed by both humans and animals are groups: sets of moving entities which travel together for a significant amount of time. Identifying and tracking groups is an important task in a variety of research areas, such as wildlife ecology, urban transport, or sports analysis. Consequently, in recent years various definitions and corresponding detection and tracking algorithms have been proposed, such as  herds~\cite{herds1}, mobile groups~\cite{groups2005}, clusters~\cite{clusters2005, gennady2010cluster}, and flocks~\cite{reynolds1987flocks, benkert2008reporting}. 

In computational geometry, there is a sequence of papers on variants of the \emph{trajectory grouping structure} which allows a compact representation of all groups within a set of moving entities~\cite{grouping-structure, trajectorygroupinggeodesic, van2016grouping, isaac-KreveldLSW16, gis-WiratmaKLS19}. We follow the same definitions and notation with a size parameter $m$, a temporal parameter $\delta$, and a spatial parameter $\sigma$. A set of entities forms a group during time interval $I$ if it consists of at least $m$ entities, $I$ is of length at least $\delta$, and the union of the discs of radius $\sigma$ centered around all entities forms a single connected component.

However, not only the mere existence of one or more groups is an important fact to discover, in many applications areas it is equally important to detect how individuals within a group or the group as a whole are moving. Several papers hence focus on defining~\cite{wood2011detecting}, detecting~\cite{gudmundsson2008movement}, and categorizing~\cite{wood2011detecting, movementpatterns} movement patterns of complete groups. These patterns are based on coordinated behavior of the individuals within the group; for example, a possible pattern is a group of animals which all exhibit foraging behavior. Another type of such patterns are formations, such as geese flying in a V-shape or groups following a leader~\cite{laube2005discovering, grouptransformations}. Naturally, in this context there are also several papers which focus on detecting roles in sports teams~\cite{lucey2014get, lucey2013team} and identifying formations in football teams~\cite{sportsanalysis, playingstyle, formationdetection, GUDMUNDSSON201416}.

Apart from the actions of the individual moving entities in a group, the actual shape of the group and the density distribution also carry meaning. Consider, for example, the herd of wildebeest in Fig.~\ref{fig:groups}. Both its global shape and the distribution of dense areas indicate that this herd is migrating. Research in wildlife ecology~\cite{jasperphd, densityvsprey} has established that animals often stay close together when not under threat and respond to immediate danger by spreading out. Hence from the density and the extent of a herd we can infer fear levels and external disturbances.
The density distribution and general shape of a group are not only meaningful in wildlife ecology, but they can also provide useful insights when monitoring, for example, visitors of a festival to detect the onset of a panic.
In this paper we initiate the algorithmic study of the shape of a moving group. Specifically, we identify and track particularly dense areas which provide a meaningful first idea of the time-varying shape of the group.  

\begin{figure}[t]
    \centering
    \includegraphics[page=2]{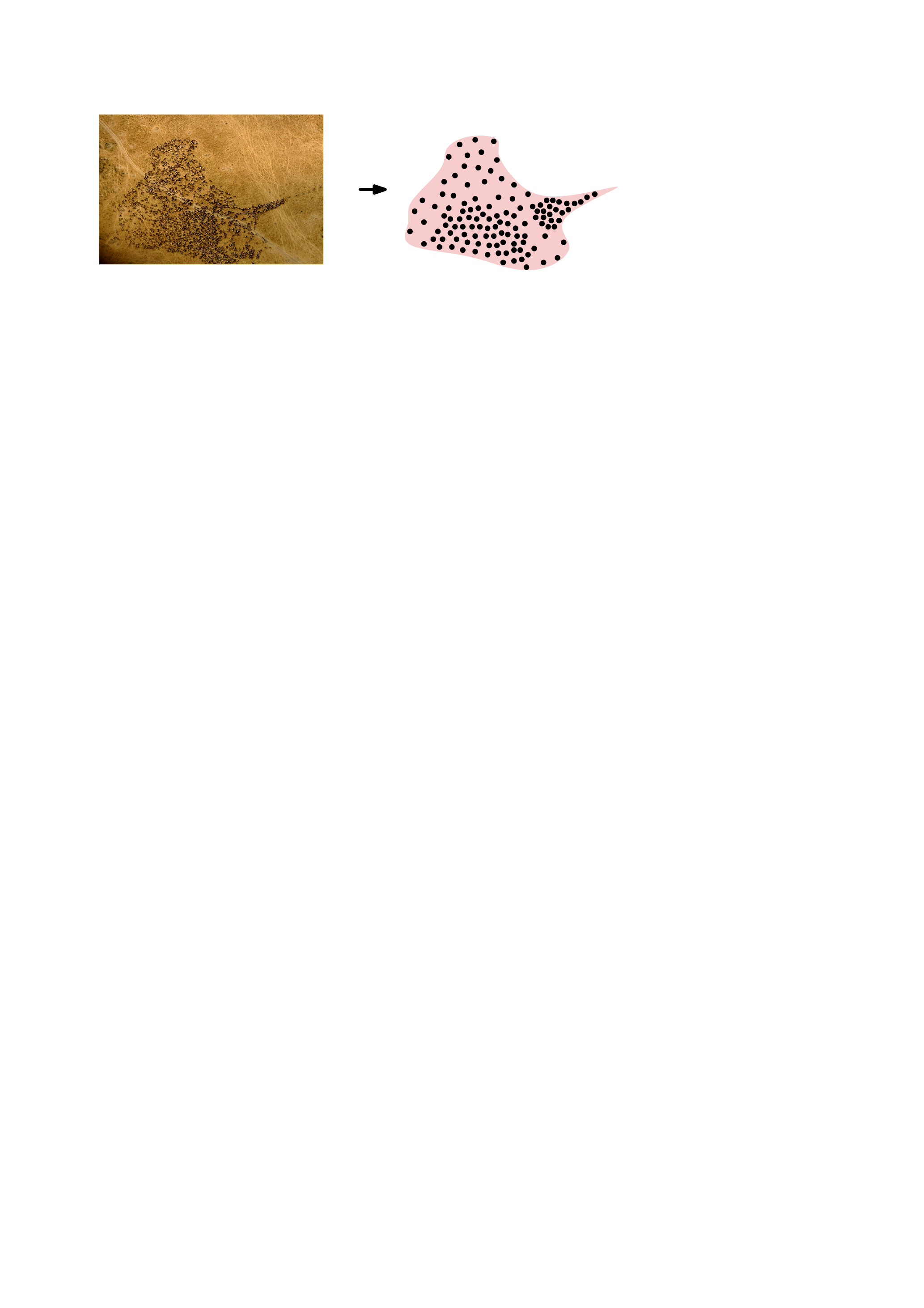}
    \caption{A herd of wildebeest, the shape of the group indicates migratory behavior.\footnotemark} 
    \label{fig:groups}
\end{figure}
\footnotetext{Photo by T. R. Shankar Raman, \href{https://commons.wikimedia.org/wiki/File:Wbeest_Mara.jpg}{https://commons.wikimedia.org/wiki/File:Wbeest\_Mara.jpg} on 25/11/2022.}

In this paper we initiate the algorithmic study of the shape of a moving group. Specifically, we identify and track particularly dense areas which provide a meaningful first idea of the time-varying shape of the group. It is our goal to develop a solid theoretical foundation which will eventually form the basis for a software system that can track group shapes in real time. We believe that algorithm engineering efforts are best rooted in as complete as possible an understanding of the theoretical tractability of the problem at hand. To develop an efficient algorithmic pipeline, we are hence making several simplifying assumptions on the trajectories of the moving objects (known ahead of time, piecewise linear, within a bounding box). In the discussion in Section~\ref{sec:discussion} we (briefly) explain how we intend to build further on our theoretical results to lift these restrictions, trading theoretical guarantees for efficiency.

\subparagraph{Problem statement.} Our input consists of a set $P$ of $n$ moving points in $\Reals^2$; we assume that the points follow piece-wise linear motion and are contained in a bounding box $\mathcal{D} = [0,D] \times [0,D]$ with size parameter $D$. The position of a point $p \in P$ over time is described by a function $p(t)$, where $t \in [0, T]$ is the time parameter. We omit the dependence on $t$ when it is clear from the context, and simply denote the position of a point as $p$. We refer to the $x$- and $y$-coordinates of a point $p$ as $x(p)$ and $y(p)$, respectively. 

We assume that the set $P$ continuously forms a single group; we aim to monitor the density of $P$ over time. We measure the density of $P$ at position $(x, y)$ using the well-known concept of \emph{kernel density estimation} (KDE)~\cite{KDE}. KDE uses \emph{kernels} around each point $p \in P$ to construct a function $\KDE_P\colon \Reals^2 \rightarrow \Reals^+$ such that $\KDE_P(x, y)$ estimates the density of $P$ at position $(x, y)$.  We are mainly interested in how the density peaks of $P$, that is, the local maxima of the function $\KDE_P$, change over time. Not all local maxima are equally relevant (some are minor ``bumps'' caused by noise), so we would like to only track the significant local maxima through time. We measure the significance of a local maximum using the concept of \emph{topological persistence}. 

\subparagraph{Approach and organization.} We could simply attempt to maintain the entire function $\KDE_P$ over time and additionally keep track of its local maxima. However, doing so would be computationally expensive, as the complexity of $\KDE_P$ is  at least quadratic in $n$ (in general, it grows exponentially with the dimension $d$ of its domain). Furthermore, there are good reasons to approximate $\KDE_P$: (1) KDE is itself also an approximation of the density, and (2) approximating $\KDE_P$ may directly eliminate local maxima that are not relevant. 

The following is a simple approach to (roughly) approximate $\KDE_P$: we build a quadtree on $P$, and we consider smaller cells to have higher density. This approach has the advantage that the spatial resolution near the density peaks is higher. However, there are two main drawbacks: (1) the approximation does not depend on the chosen kernel size for the KDE (this is an important parameter for analysis), and (2) we cannot guarantee that all significant local maxima of $\KDE_P$ are preserved in the approximation.

The approach we present in this paper builds on the simple approach described above, but eliminates its drawbacks. We first introduce the approximation of a function $f\colon \Reals^2 \rightarrow \Reals^+$ using a volume-based quadtree. Specifically, instead of subdividing a cell in the quadtree when it contains more than one point, we subdivide a cell in the quadtree if the volume under the function $f$ contained within the cell exceeds some pre-specified threshold. Furthermore, we assign a single function value to each leaf cell of the quadtree corresponding to the average value of $f$ within the cell. This results in a 2-dimensional step function that approximates $f$ (see Fig.~\ref{fig:example-approx}). In Section~\ref{sec:volume-quadtree} we discuss the volume-based quadtree in more detail and prove several properties concerning its complexity, structure, and how well it approximates the original function $f$, which may be of independent interest. 

\begin{figure}
    \centering
    \includegraphics[width=.7\textwidth]{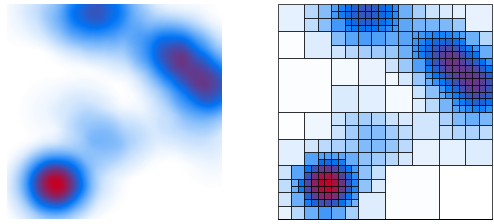}
    \caption{A 2-dimensional function (left) approximated as a step function (right).}
    \label{fig:example-approx}
\end{figure}

We aim to maintain a volume-based quadtree for $\KDE_P$ over time, without explicitly maintaining $\KDE_P$. In Section~\ref{sec:disc} we therefore show how to replace $\KDE_P$ by a set of points $Q$ that approximates the volume under $\KDE_P$. Specifically, we use the concepts of \emph{corests} and \emph{$\varepsilon$-approximations} to compute a small set of points that can accurately approximate the volume under $\KDE_P$. We choose the points in $Q$ such that we do not have to update $Q$ as long as the original points in $P$ do not change their trajectories. We can, however, update $Q$ efficiently when a trajectory changes. 
%Specifically, we show that we can construct a moving point set $Q$ of size $O(\poly(\frac{1}{\varepsilon}))$ in $O(n\poly(\frac{\log(n)}{\varepsilon}))$ time, such that approximating the volume of $\KDE_P$ using $Q$ yields an error of at most $\varepsilon$.

We now transformed the problem of maintaining local maxima of $\KDE_P$ to the problem of maintaining a quadtree on a set of moving points. In Section~\ref{sec:kinetic} we present a simple \emph{kinetic data structure} (KDS) that efficiently maintains the volume-based quadtree that approximates $\KDE_P$, as well as its local maxima, which correspond to the high persistence local maxima of $\KDE_P$. Theorem~\ref{thm:kds-final} formally states our result; $\poly(x)$ denotes a polynomial function in $x$.

\begin{theorem}\label{thm:kds-final}
Let $f = \KDE_P$ be a KDE function on a set $P$ of $n$ linearly moving points in $\Reals^2$. For any $\varepsilon > 0$, there exists a KDS that approximately maintains the local maxima of $f$ with persistence at least $2\varepsilon$. The KDS can be initialized in $O\left(n \poly\left(\frac{\log n}{\varepsilon}\right)\right)$ time, processes at most $O\left(n \poly\left(\frac{1}{\varepsilon}\right)\right)$ events, and can handle events and flight plan updates in $O\left(\log n + \poly\left(\frac{1}{\varepsilon}\right)\right)$ and $O\left(\poly\left(\frac{\log n}{\varepsilon}\right)\right)$ time, respectively.   
\end{theorem}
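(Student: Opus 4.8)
The plan is to prove Theorem~\ref{thm:kds-final} by composing the three approximation layers developed in Sections~\ref{sec:volume-quadtree}--\ref{sec:kinetic}, while keeping careful track of how their error budgets and running times combine. The first step is to fix the error budget: we require that, at every time $t \in [0,T]$, the step function $\hat{f}$ actually maintained by the data structure satisfies $\|\hat{f} - f\|_\infty < \varepsilon$. By the stability of topological persistence under the sup-norm, this forces a correspondence between the local maxima of $f$ of persistence at least $2\varepsilon$ and the local maxima of $\hat{f}$, with every spurious maximum of $\hat{f}$ having persistence strictly below $2\varepsilon$; combined with the spatial approximation bounds proved in Section~\ref{sec:volume-quadtree}, this yields exactly the ``approximately maintains'' guarantee of the statement. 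We then split the $\varepsilon$ budget into two parts: one for replacing $\KDE_P$ by the coreset / $\varepsilon$-approximation point set $Q$ of Section~\ref{sec:disc}, so that the volume measure induced by $Q$ is within the allotted error of the volume under $f$ on every quadtree cell, and one for the averaging (discretization) error introduced by the volume-based quadtree of Section~\ref{sec:volume-quadtree} once its volume threshold is set to $\Theta(\varepsilon)$.

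For the static structure, I would first build $Q$ via the results of Section~\ref{sec:disc}; by construction $|Q| = \poly\!\left(\frac{\log n}{\varepsilon}\right)$, the set $Q$ does not change while the input trajectories are fixed, and each point of $Q$ follows a low-complexity trajectory. I then build the volume-based quadtree $\mathcal{T}$ on $Q$ with volume threshold $\Theta(\varepsilon)$; by Section~\ref{sec:volume-quadtree} its size is $\poly\!\left(\frac{\log n}{\varepsilon}\right)$ (the dependence on $\dom$ and on the shape of the KDE kernel being absorbed into the bounds proved there), and $\hat{f}$ is the associated step function, whose local maxima and their persistences are obtained by a merge-tree / union-find sweep over the leaves of $\mathcal{T}$ in decreasing value order in time polynomial in $|\mathcal{T}|$. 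The overall initialization cost is dominated by the coreset construction on $n$ points, giving $O\!\left(n\,\poly\!\left(\frac{\log n}{\varepsilon}\right)\right)$.

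For the kinetic part, observe that the combinatorial structure of $\mathcal{T}$ --- and hence $\hat{f}$, its maxima, and their persistences --- changes only at discrete events of two kinds: (i) a point of $Q$ crosses a cell boundary, and (ii) the volume under $f$ inside a cell crosses the threshold, triggering a split or a merge. Both are certified by functions of $t$ of bounded complexity (coordinate comparisons for (i); the sign of $\text{volume} - \text{threshold}$ for (ii), which has bounded complexity because the points of $Q$ move along low-complexity trajectories by Section~\ref{sec:disc}), so each certificate has $O(1)$ roots per trajectory interval and the total number of events is $O\!\left(n\,\poly\!\left(\frac{1}{\varepsilon}\right)\right)$. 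Each event is local: it affects $O(\log|\mathcal{T}|)$ cells of $\mathcal{T}$, after which we recompute the local maxima and persistences of $\hat{f}$ from scratch in $\poly\!\left(\frac{1}{\varepsilon}\right)$ time and re-insert $O(1)$ certificates into the event queue; the priority-queue operations and the navigation to the affected point contribute the $O(\log n)$ term, giving $O\!\left(\log n + \poly\!\left(\frac{1}{\varepsilon}\right)\right)$ per event. A flight-plan update changes one trajectory of $P$; we re-run the part of the coreset construction affected by it, which changes $\poly\!\left(\frac{\log n}{\varepsilon}\right)$ points of $Q$, and delete and re-insert each of them in $\mathcal{T}$ together with the attendant rescheduling, for a total of $O\!\left(\poly\!\left(\frac{\log n}{\varepsilon}\right)\right)$ time.

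\textbf{Main obstacle.} The delicate point is the event bound: a priori a cell whose volume hovers near the threshold could trigger an unbounded cascade of split/merge events. This is handled as for kinetic thresholds in general --- the volume under $f$ restricted to a fixed cell is, on each trajectory interval, a fixed function of $t$ of bounded complexity (this is precisely where the low-complexity motion of $Q$ from Section~\ref{sec:disc} is needed), so it crosses any fixed value $O(1)$ times per interval, and a small hysteresis rule around the threshold prevents chattering between split and merge; this is what yields the polynomial event bound. A secondary subtlety is that a step function has plateau maxima, so throughout ``local maximum'' must be read as ``inclusion-maximal connected union of equal-valued leaves''; persistence remains well defined for such sets, and the slack between persistence $2\varepsilon$ for $f$ and the error $<\varepsilon$ for $\hat{f}$ absorbs the borderline maxima whose persistence is close to $2\varepsilon$.
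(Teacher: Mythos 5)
Your high-level architecture matches the paper's (coreset $Q$ approximating the volume under $\KDE_P$, a threshold-based quadtree on $Q$, persistence stability via Theorem~\ref{thm:persistence}/Lemma~\ref{lem:persistent-maxima}, and a KDS on the coreset points), but several load-bearing steps are missing or would fail as written. First, your error budget is not consistent with the quadtree analysis: you set the volume threshold to $\Theta(\varepsilon)$ and assert $\|\hat f - f\|_\infty < \varepsilon$, but the discretization error of a cell scales like $\sqrt[3]{\lambda^2 \rho}$ (Lemmas~\ref{lem:vol-bounds} and~\ref{lem:quadtree-error}), so one needs $\rho = \Theta(\varepsilon^3)$, and the coreset error must then be taken as $\ec = \Theta(\varepsilon^4)$ so that the extra term $\ec/s(v)^2$ stays below $\varepsilon/2$ given the minimum cell size $s(v) = \Omega(\sqrt{\rho})$ of Lemma~\ref{lem:cellsize} (this is the content of Lemma~\ref{lem:weight-quad-error} and the parameter choice after it); with $\rho=\Theta(\varepsilon)$ the sup-norm guarantee, and hence the persistence argument, breaks. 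Second, your event count does not actually produce the stated bound: ``each certificate has $O(1)$ roots'' does not bound how many cell boundaries a coreset point crosses over its lifetime. The paper gets the bound from the minimum cell size $\Omega(\sqrt{\rho})$ plus linear motion inside a domain of size $D = O(n)$, so each of the $\poly(1/\varepsilon)$ points of $Q$ triggers $O(D/\sqrt{\rho})$ events; without that argument the factor $n$ in the event bound is unexplained. Relatedly, your type-(ii) ``volume crosses threshold'' certificates and the hysteresis rule are both unnecessary and problematic: the maintained quantity is the discrete weight $|Q \cap R(v)|/|Q|$, which changes only at point-crossing events, and a hysteresis slack would alter the leaf-weight invariant on which Lemma~\ref{lem:weight-quad-error} relies.

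Third, your per-event handling does not meet the claimed $O(\log n + \poly(1/\varepsilon))$: recomputing maxima (let alone persistences) ``from scratch'' sweeps a structure of size $O(\frac{1}{\varepsilon^3}\log\frac{n}{\varepsilon})$, which is multiplicative in $\log n$, and the paper in fact never computes the persistence of $\hat f$ at all --- it only maintains which cells are weak local maxima of $f_{\WT}$, which by Lemma~\ref{lem:persistent-maxima} suffices for a one-way injection from the $2\varepsilon$-persistent maxima of $f$ (your stronger claim that every spurious maximum of $\hat f$ has persistence below $2\varepsilon$ is neither needed nor established). The missing idea that makes local updates cheap is the bounded-neighborhood property: for any cell that can be a persistent maximum (i.e., $h(v) \geq \varepsilon$), all neighbors with $h(w) \leq h(v)$ have $s(w) \geq \frac{1}{4}s(v)$ (Lemmas~\ref{lem:weight-local-maxima} and~\ref{lem:local-maxima-bounded}), so each affected cell has only $O(1)$ relevant neighbors and an event touches $O(d(\WT)) = O(\log\frac{D}{\varepsilon})$ cells; without this lemma a candidate maximum could abut arbitrarily many tiny cells and the check is not constant time. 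Finally, for flight-plan updates, ``re-running the affected part of the coreset construction'' is not justified --- a single trajectory change can alter the $\varepsilon$-approximation globally; the paper achieves $O(\poly(\frac{\log n}{\varepsilon}))$ by treating each input point's kernel sample as one element in the dynamic merge-and-halve structure of Theorem~\ref{thm:eps-approx-rangespace} and accepting a full rebuild of $\WT$, whose size is itself $O(\poly(\frac{\log n}{\varepsilon}))$.
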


The individual concepts and techniques that we use are fairly straightforward, but the combination of all (KDE, quadtrees, coresets/$\varepsilon$-approximations, KDS, and topological persistence) to achieve a single goal is, to the best of our knowledge, quite unique, and certainly non-trivial. Our main contribution is therefore the proper combination of the various parts and the careful analysis that this requires. We briefly reflect on our approach (and its shortcomings) in Section~\ref{sec:discussion}.

%\subparagraph{Results.} We describe an approximation of a time-varying function, constructed using a quadtree on the volume underneath the function. We also show how to construct a kinetic data structure based on this approximation that is capable of tracking the high-persistence maxima of the function, given that the function is described by a density estimation generated using kernel density estimation. In Section \ref{sec:prelim} we discuss terminology and a number of assumptions. Section \ref{sec:volume-quadtree} describes how to construct an approximation of any static function that fits a number of assumptions. The approximation constructs a quadtree based on the volume underneath the function. Then, in Section \ref{sec:KDS-maxima} we describe how to use the approximation from the previous section to construct a kinetic data structure that maintains the high-persistence maxima of an estimated density function.

\section{Preliminaries}\label{sec:prelim}

\subparagraph{Kernel density estimation.} Let $P$ be a set of points in $\Reals^2$. To obtain a kernel density estimation (KDE)~\cite{KDE, rosenblatt1956remarks} $\KDE_P$ of $P$, we need to choose a kernel function $K\colon \Reals^2 \rightarrow \Reals^+$ that captures the influence of a single point on the density. Let $\sigma$ denote the \emph{kernel width} and $\|(x, y)\|$ the Euclidean norm of $(x, y)$.
Some examples of common kernels include:\\
\begin{minipage}{.7\textwidth}
  \begin{description}
\item[Uniform.] $K(x, y) = \begin{cases}
1, & \text{for } \|(x, y)\| < \sigma \\
0, & \text{otherwise}
\end{cases}$\item[Cone.] $K(x, y) = \begin{cases}
1-\|(x, y)\|/\sigma, & \text{for } \|(x, y)\| < \sigma \\
0, & \text{otherwise}
\end{cases}$
\item[Pyramid.] $K(x, y) = \begin{cases}
1-\max(|x|, |y|)/\sigma, & \text{for } |x|, |y| < \sigma \\
0, & \text{otherwise}
\end{cases}$
\item[Gaussian.] $K(x, y) = \frac{1}{2 \pi \sigma^2} e^{-\frac{x^2 + y^2}{2 \sigma^2}}$
\end{description}
\end{minipage}
\begin{minipage}{.2\textwidth}
  \centering
  \includegraphics[width=.7\textwidth]{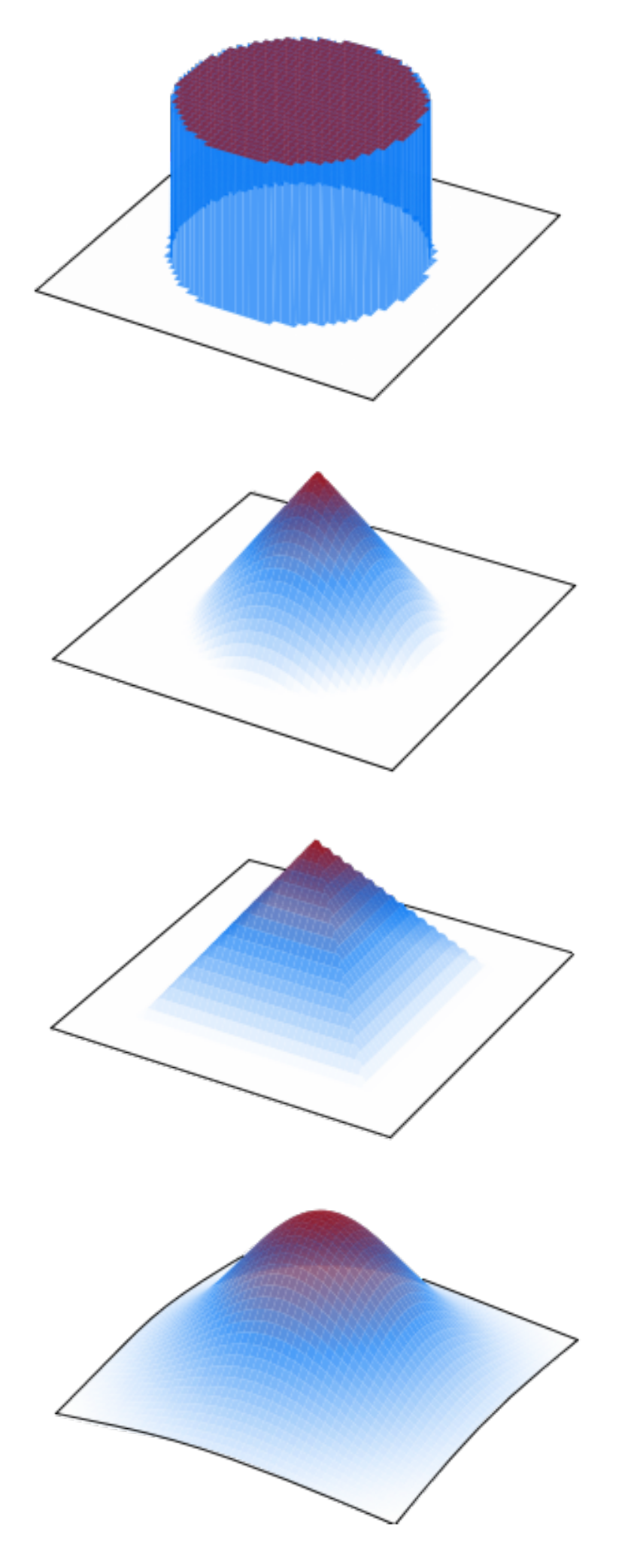}
\end{minipage}

In Section~\ref{sec:volume-quadtree} we show that our approach works with any kernel that has bounded slope. This excludes the uniform kernel. Furthermore, for simplicity we assume that the (square) range in which the kernel function produces nonzero values is bounded by kernel width $\sigma$. Kernel functions such as the Gaussian kernel do not have this property a priori, but can be easily adapted.
We also scale the input, without loss of generality, such that $\sigma = 1$, and the kernel function such that the volume under the kernel function is always $1$. We also assume that the maximum value attained by the kernel function is at most $1$. 
This property holds for most kernels and for all kernels we consider in this paper.
%; otherwise, our results will only be off by a constant factor (unless the kernel function is really strange).

Given a kernel function $K$ and a point set $P$, we compute the KDE as follows:
\begin{equation}
\KDE_P(x, y) = \frac{1}{n} \sum_{p \in P} K(x - x(p), y - y(p)) \qquad \text{for } (x, y) \in \Reals^2.
\end{equation}

Note that the volume under $\KDE_P(x, y)$ is the same as the volume under a single kernel function, which is $1$. Also, the maximum value attained by $\KDE_P(x, y)$ is at most the maximum value attained by a single kernel function, which we assume to be at most $1$.

\subparagraph{Quadtrees.} Consider a bounded domain of size $D$ in $\Reals^2$ which is defined by the square $\dom = [0, D] \times [0, D]$. A \emph{quadtree} $T$ on this domain is a tree where each node $v \in T$ represents a region $R(v) \in \dom$. The region of the root $r \in T$ is the whole domain, $R(r) = \dom$.
Every node $v \in T$ is either a leaf, or it has exactly $4$ children. 
We create the regions of the children by partitioning the region $R(v)$ into $4$ equal-size regions along the vertical and horizontal line through the center of $R(v)$. The leaves of $T$ partition the domain $\dom$. We refer to a leaf in a quadtree $T$ also as a \emph{cell}. For a node $v \in T$ we further denote the side length of $R(v)$ as $s(v)$. Note that, if $w$ is a child of $v$ in $T$, then $s(w) = s(v)/2$.  Sometimes we need to consider leaf nodes that are neighbors to a leaf $v \in T$ in a spatial sense. We denote this set of neighbors of a leaf $v \in T$ by $\mathcal{N}(v)$ and a leaf $u \in T$ is in $\mathcal{N}(v)$ if and only if the closed regions $R(u)$ and $R(v)$ share a (piece of a) boundary, that is, $R(u) \cap R(V) \neq \emptyset$. See Fig.~\ref{fig:quadtree-example} for an example.

\begin{figure}
    \centering
    \includegraphics{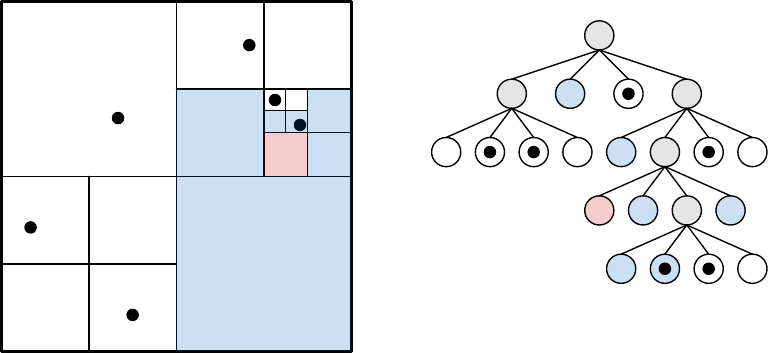}
    \caption{A quadtree on a small point set. The leaf nodes of the quadtree correspond to a set of regions that partition the domain. The neighborhood $\mathcal{N}(v)$ of the red node is shown in blue.}
    \label{fig:quadtree-example}
\end{figure}

Generally, a quadtree $T$ is constructed on a point set $P$; we subdivide a node $v \in T$ as long as $R(v)$ contains more than one point (or as long as the number of points exceeds some threshold). The depth of a quadtree may be unbounded in general, however, in our setting the depth is naturally bounded.
We use a quadtree $T$ to represent a function, and therefore we augment the quadtree by adding a value $h(v) \in \Reals^+$ for each leaf node $v \in T$. The corresponding function $f_T\colon \dom \rightarrow \Reals^+$ is then defined as the function for which $f_T(x, y) = h(v)$ if $(x, y) \in R(v)$ for some leaf $v \in T$. Note that $f_T$ is ill-defined on the boundary between cells, but for our approach the function is allowed to take the value $h(v)$ of any of the adjacent cells, and hence we will simply ignore the function values at the boundaries of cells.

\subparagraph{Coresets.} In the context of geometric approximation algorithms, a \emph{coreset} $Q$ of a data set for some algorithm $\mathcal{A}$ is a reduced data set such that running $\mathcal{A}$ on the coreset $Q$ provides an approximation for running the algorithm on the full data set. In this paper we are mostly interested in a specific type of coreset used to estimate the density of geometric objects, namely \emph{$\varepsilon$-approximations}. To define $\varepsilon$-approximations, we first need to introduce the concept of range spaces. A \emph{range space} $(S, \mathcal{R})$ consists of a finite set of objects $S$ and a set of ranges $\mathcal{R}$, where $\mathcal{R}$ is a set of subsets of $S$, that is, $\mathcal{R} \subseteq 2^{S}$. Typically, in the context of geometry, $S$ is a set of points, and $\mathcal{R}$ consists of all subsets of $S$ that can be covered by some geometric shape (e.g., a disk or square). Given a range space $(S, \mathcal{R})$, an \emph{$\varepsilon$-approximation} for some $\varepsilon > 0$ is a subset $Q \subset S$ such that, for all ranges $R \in \mathcal{R}$ we have that $\left|\frac{|R \cap Q|}{|Q|} - \frac{|R|}{|S|}\right| \leq \varepsilon$. 

An important concept in the context of range spaces is the VC-dimension. The \emph{VC-dimension} $d$ of a range space $(S, \mathcal{R})$ is the size of the largest subset $Y \subseteq S$ such that $\mathcal{R}$ restricted to $Y$ contains all subsets of $Y$, that is, $2^Y \subseteq \mathcal{R}_{|Y}$, where $\mathcal{R}_{|Y} = \{R \cap Y \mid R \in \mathcal{R}\}$. We say that $Y$ is \emph{shattered} by $\mathcal{R}$.
We use the following result by Agarwal~\etal\cite{approx-rangespace}.

\begin{restatable}[\hspace{-1sp}\cite{approx-rangespace}]{theorem}{dynamicapprox}
\label{thm:eps-approx-rangespace}
    Given a range space $X = (S, \mathcal{R})$ of VC-dimension $d$ and a parameter $\varepsilon$, one can (deterministically) maintain an $\varepsilon$-approximation of $X$ of size $O(\frac{1}{\varepsilon^2}\log(\frac{1}{\varepsilon}))$, in $O(\frac{\log^{2d+3}n}{\varepsilon^{2d+2}}(\log(\log(n)/\varepsilon))^{2d+2})$ time per insertion and deletion.
\end{restatable}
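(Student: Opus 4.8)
The plan is to reconstruct this as an instance of the \emph{merge-and-reduce} framework, made deterministic through low-discrepancy colorings and made dynamic through a balanced rebuilding scheme. I would first isolate the two algebraic properties of $\varepsilon$-approximations that drive everything. \textbf{(Merge)} If $A_1$ and $A_2$ are $\delta$-approximations of equal-size sets $S_1$ and $S_2$, then $A_1 \cup A_2$ is a $\delta$-approximation of $S_1 \cup S_2$; this extends to weighted approximations of unequal-size sets. \textbf{(Compose)} A $\delta_1$-approximation of a $\delta_2$-approximation is a $(\delta_1+\delta_2)$-approximation of the original set. The third ingredient is a deterministic \emph{halving} routine: given $m$ points in a range space of VC-dimension $d$, produce a subset of size $m/2$ that is a $\delta(m)$-approximation with $\delta(m) = O(\sqrt{(d/m)\log(m/d)})$. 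This routine is exactly where determinism enters: rather than sampling, one computes a low-discrepancy two-coloring of the points with respect to $\mathcal{R}$ (via the method of conditional expectations over the $O(m^d)$ distinct ranges guaranteed by Sauer's lemma) and keeps one color class. Its running time is polynomial in $m$ with an exponent governed by $d$, and this is the source of the $d$-dependent exponents in the final bound.

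Next I would assemble the static construction and verify the size bound. Building a balanced binary tree whose leaves are the points of $S$ and storing at each internal node the halving of the union of its children's stored approximations, property \textbf{(Merge)} keeps each stored union an approximation of the corresponding subtree's full point set, while property \textbf{(Compose)} shows the error accumulates additively by one halving error per level. Over a tree of height $O(\log n)$ these errors telescope, so choosing a uniform per-level target error of $\Theta(\varepsilon/\log n)$ keeps the total error of the root's stored set at most $\varepsilon/2$; a single final reduce of that set to the optimal size then yields an $\varepsilon$-approximation of size $O(\frac{1}{\varepsilon^2}\log\frac{1}{\varepsilon})$. Because the per-level target error is $\Theta(\varepsilon/\log n)$, every stored set has size only $O((\frac{\log n}{\varepsilon})^2\log\frac{\log n}{\varepsilon})$, i.e.\ polylogarithmic in $n$.

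To make this dynamic I would maintain the merge tree explicitly on top of a weight-balanced search tree over the current point set, so that the height stays $O(\log n)$. An insertion or a deletion alters a single leaf and invalidates the stored approximations only at its $O(\log n)$ ancestors; I would recompute these bottom-up, each recomputation being one merge of two polylogarithmic-size sets followed by one halving. The halving dominates, costing $O(m^{d+1})$ on a set of size $m = O((\frac{\log n}{\varepsilon})^2\log\frac{\log n}{\varepsilon})$; summing this cost over the $O(\log n)$ ancestors on the path reproduces the claimed bound $O(\frac{\log^{2d+3}n}{\varepsilon^{2d+2}}(\log(\frac{\log n}{\varepsilon}))^{2d+2})$ per update. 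Rebalancing rotations are absorbed by recomputing the approximations at the $O(1)$ rotated nodes per rotation and charging them amortized in the usual weight-balanced analysis.

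The main obstacle I anticipate is deletions, since they break the insertion-only logarithmic method that would otherwise suffice: a deleted point may have been discarded by a halving far below the root, so there is no single stored copy to remove, and one must instead guarantee that recomputing the affected root-to-leaf path restores a globally consistent, correctly scaled error budget. Handling this cleanly requires the weighted form of property \textbf{(Merge)} (subtree sizes are only approximately equal), a careful argument that the per-level error budget is preserved under rebalancing, and the amortization of rebuild cost across rotations; the deterministic halving routine and its discrepancy analysis are the other delicate component, as they are simultaneously the running-time bottleneck and the reason the exponents depend on the VC-dimension $d$.
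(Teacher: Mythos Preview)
This theorem is not proved in the paper; it is quoted from Agarwal~\etal\ and only briefly reviewed in the ``Flight plan updates'' paragraph of Section~\ref{sec:kinetic}. That review describes a different dynamization than yours: the cited construction maintains a \emph{forest} of $O(\log n)$ perfectly balanced trees of distinct ranks (a tree of rank $i$ holds $2^i$ points), in the Bentley--Saxe binary-counter style. Insertion creates a rank-$0$ tree and cascades merges of equal-rank trees; deletion breaks the tree containing the point into $O(\log n)$ smaller trees and re-merges. The output $\varepsilon$-approximation is then computed from the union of the $O(\log n)$ root approximations.

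Your reconstruction instead keeps a \emph{single} weight-balanced search tree and recomputes the stored approximations along one root-to-leaf path per update, absorbing rotations by the usual BB$[\alpha]$ amortization. Both schemes are standard dynamizations of decomposable problems and both reach the stated bounds via the same merge/halve primitives and the same per-level error budget of $\Theta(\varepsilon/\log n)$; the rank-forest method avoids rotations (and the accompanying amortization argument) at the price of maintaining several trees and a final cross-root merge, while your single-tree method is structurally simpler but requires the weighted form of the merge property and the rebalancing analysis you flag. Since the paper only summarizes the cited result, your sketch is an acceptable alternative proof, but it does not match the data structure the paper actually describes and later relies on for flight-plan updates.
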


Although it is possible to prove better bounds on the size of an $\varepsilon$-approximation for certain specific range spaces, we use this result since it is generic, and it also allows for efficient insertion and deletion of elements, which is relevant for our approach. 

\subparagraph{Kinetic data structures.} Kinetic Data Structures (KDS) were introduced by Basch~\cite{KDSs} to efficiently track attributes of time-varying geometric objects, such as the convex hull of a set of moving points. A KDS uses so-called certificates to ensure that the attribute in question is unchanged. That is, the maintained attribute changes only when a certificate fails. Certificates are geometric expressions which are parameterized by the trajectories of the objects. In a classic KDS we assume that these trajectories, also called flight plans, are known at all times. We refer to the failure of a certificate as an event; all certificates are sorted by failure time and stored in an event queue. The KDS then proceeds to handle events one by one, updating its certificates and possibly also the tracked attribute.

The efficiency of a classic KDS is evaluated according to four quality criteria.
The \emph{responsiveness} of a KDS measures the worst-case amount of time necessary to restore the structure after an event; a KDS is responsive if each event can be handled in polylogarithmic time in the input size. The \emph{locality} of a KDS is determined by the maximum number of certificates that depend on the same object. The \emph{compactness} of a KDS measures the  maximum number of certificates that can exist at any one time. Finally, the \emph{efficiency} of a KDS is determined by the ratio between all events it has to handle versus the number of \emph{external events} which correspond to actual changes in the tracked attribute. An efficient KDS does not need many additional certificates, and hence events, beyond those strictly necessary to maintain the attribute.

\subparagraph{Topological persistence.}
Let $f\colon \Reals^2 \rightarrow \Reals$ be a (smooth) 2d function. A \emph{critical point} of $f$ is a point $(x, y)$ such that the gradient of $f$ at $(x, y)$ is $(0, 0)$. Critical points capture the overall (topological) structure of the function $f$, and are hence important for analysis. Critical points of a 2d function come in three types: \emph{local minima}, \emph{saddle points}, and \emph{local maxima}. While some critical points capture relevant features of the function $f$, other critical points may simply exist only due to noise. We can measure such relevance of a critical point via the concept of \emph{topological persistence}. 

Let $L_f(z) = \{(x, y) \mid f(x, y) \leq z\}$ be the sublevel set of $f$ with respect to $z \in \Reals$. Now consider the connected components of $L_f(z)$ as we increase the value of $z$. If $(x, y)$ is a local minimum of $f$, then a new connected component is formed in $L_f(z)$ at $z = f(x, y)$, and we call $(x, y)$ the \emph{representative} of that connected component. If $(x, y)$ is a saddle point of $f$, then at $z= f(x, y)$ either two connected components of $L_f(z)$ are merged, or a new loop/hole is created in $L_f(z)$. In the first case we pair $(x, y)$ with the representative $(x', y')$ of the connected component that was formed last. This pair of critical points is called a \emph{persistence pair}.
Finally, if $(x, y)$ is a local maximum of $f$, then a hole in $L_f(z)$ is closed at $z=f(x,y)$. We then pair $(x, y)$ with the most recently introduced saddle point $(x', y')$ responsible for creating the respective hole (avoiding exact definitions). 

If $((x, y), (x', y'))$ is a persistence pair of $f$, then we call $|f(x, y) - f(x', y')|$ the \emph{persistence} of both critical points $(x, y)$ and $(x', y')$ of $f$. Critical points with low persistence mostly arise due to noise, while critical points with high persistence are important for the overall structure of $f$. Therefore we are interested in local maxima with sufficiently high persistence. Now let $\hat{f}$ be an approximation of $f$, where the goal is to preserve the local maxima with high persistence in $\hat{f}$. 

Before investigating which maxima are maintained, we require a number of concepts. For a function $f$ we can create a \emph{persistence diagram} $D(f)$ by adding a point $(f(x, y), f(x', y'))$ for every persistence pair $((x, y), (x', y'))$ of $f$. Additionally, we add all points on the diagonal (that is, of the form $(z, z)$ for $z \in \Reals$) to $D(f)$. Given two functions $f$ and $g$, let $\|f-g\|_\infty = max_{(x,y) \in \Reals^2} |f(x, y) - g(x, y)|$ indicate the $L_\infty$ distance between $f$ and $g$. For two (multi)sets of points $X$ and $Y$ we can define the bottleneck distance as 
\[
d_B(X, Y) = \inf_\gamma \sup_{x \in X} \|x - \gamma(x)\|_\infty, 
\]
where $\gamma$ ranges over all bijections from $X$ to $Y$. A generalization of the following theorem is shown by Cohen-Steiner~\etal in~\cite{persistence-diagram-stability}.

\begin{theorem}[{\cite{persistence-diagram-stability}}]\label{thm:persistence}
Given two functions $f, g\colon \mathbb{R}^2 \rightarrow \mathbb{R}$, the persistence diagrams satisfy $d_B(D(f), D(g)) \leq \|f - g\|_\infty$.
\end{theorem}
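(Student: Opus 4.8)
The plan is to prove the stability bound by exhibiting an explicit near-matching between the two persistence diagrams, driven entirely by the sublevel-set inclusions that $\|f-g\|_\infty$ forces. Write $\delta = \|f - g\|_\infty$; I may assume $f$ and $g$ are \emph{tame} (each has finitely many homological critical values and finite-rank sublevel-set homology), so that $D(f)$ and $D(g)$ are locally finite away from the diagonal; the general case follows by a standard density/limit argument, since both sides of the inequality are continuous in $f$ and $g$ under $\|\cdot\|_\infty$. The goal is then to construct a bijection $\gamma$ from $D(f)$ to $D(g)$ (points off the diagonal are allowed to match diagonal points, which absorb unpaired births/deaths) with $\sup_x \|x - \gamma(x)\|_\infty \le \delta$.

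First I would record the only geometric input we actually need. Because $|f(x,y) - g(x,y)| \le \delta$ pointwise, every $g$-sublevel set is sandwiched between two $f$-sublevel sets:
\[
L_f(z) \subseteq L_g(z + \delta) \subseteq L_f(z + 2\delta), \qquad L_g(z) \subseteq L_f(z+\delta) \subseteq L_g(z+2\delta).
\]
Applying homology $H_*$ (here both $H_0$, tracking the merging of connected components of the sublevel sets, and $H_1$, tracking the loops/holes whose closure marks the local maxima) and using functoriality, the inclusion-induced maps assemble into two families of linear maps between the filtrations of $f$ and $g$ that commute with all internal transition maps up to the shift $\delta$. This is precisely a \emph{$\delta$-interleaving} of the two sublevel-set persistence modules: the two shift-by-$\delta$ maps compose to the internal shift-by-$2\delta$ map on each side.

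The heart of the argument is then the purely algebraic claim that a $\delta$-interleaving of two tame persistence modules implies a $\delta$-matching of their diagrams. I would prove this through the quadrant/rank dictionary (the fundamental lemma of persistent homology): for $a < b$ the number of points of $D(f)$ lying in the quadrant $(-\infty, a] \times (b, \infty)$ equals the rank of the transition map $H_*(L_f(a)) \to H_*(L_f(b))$, and similarly for $g$. The interleaving converts the inclusions above into inequalities between these ranks for $f$ and for $g$ shifted by $\delta$; unpacking them for every axis-parallel box bounded away from the diagonal shows that the $\delta$-expansion (in $L_\infty$) of any such box around a $D(f)$-point contains at least as many $D(g)$-points as required, and symmetrically. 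A Hall-type bipartite matching argument, applied box-by-box and then glued across a nested exhaustion of boxes, produces the global bijection $\gamma$ moving every point by at most $\delta$; points left unmatched off-diagonal are sent to their nearest diagonal point, which the rank bookkeeping guarantees lies within $\delta$.

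The main obstacle is exactly this last step: turning the local rank inequalities into a single coherent global bijection of \emph{multisets}. The delicate points are (i) handling multiplicities correctly, so that $\gamma$ is a genuine bijection rather than merely a numerical balance of counts, and (ii) controlling points that accumulate toward the diagonal, where infinitely many low-persistence features may live and where a naive box-by-box matching can fail to glue — this is where tameness (or the limiting argument) is essential. The cleanest way to discharge it is the interpolation lemma: build the linear interpolation between the two interleaved modules, show that the induced matchings compose, and invoke the ``easy'' bijection between a module and a $\delta$-shift of itself. Alternatively one can follow the original direct route of Cohen-Steiner~\etal via the Box Lemma together with a continuity argument along the homotopy $f_t = (1-t)f + t g$, tracking how the diagram points move with $L_\infty$-speed bounded by the rate of change of $f_t$.
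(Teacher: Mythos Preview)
The paper does not prove this theorem at all: it is stated as a cited result from Cohen-Steiner~\etal~\cite{persistence-diagram-stability} and is used as a black box in the subsequent Lemma~\ref{lem:persistent-maxima}. There is therefore no ``paper's own proof'' to compare against.

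Your sketch is a faithful high-level outline of the standard proofs of the stability theorem. You correctly identify the only geometric input (the sublevel-set sandwich $L_f(z)\subseteq L_g(z+\delta)\subseteq L_f(z+2\delta)$ from the $L_\infty$ bound), package it as a $\delta$-interleaving of persistence modules, and then reduce to the algebraic stability statement. You also mention the two classical ways to finish: the Box Lemma plus the straight-line homotopy $f_t=(1-t)f+tg$ (the original Cohen-Steiner--Edelsbrunner--Harer route), and the later interpolation/induced-matching argument. Both are valid; the first is what the cited reference actually does. One small caveat: the original paper requires tameness and triangulability hypotheses on $f$ and $g$ for the Box Lemma argument to go through cleanly, and the ``standard density/limit argument'' you invoke to drop these is not entirely innocent---bottleneck distance is not obviously continuous on arbitrary diagrams, and making that step rigorous is part of what the later algebraic-stability literature (Chazal~\etal, Bauer--Lesnick) accomplishes. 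But at the level of a proof sketch for a cited theorem, what you wrote is correct and appropriately flags the genuine technical obstacle.
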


With these definitions out of the way, we can prove the following lemma:

\begin{restatable}{lemma}{persistence}\label{lem:persistent-maxima}
    Given two functions $f, \hat{f}\colon \Reals^2 \rightarrow \Reals$ such that $|f(x,y) - \hat{f}(x,y)| < \varepsilon$ for all $(x,y) \in \Reals^2$, there exists an injection from the local maxima of $f$ with persistence $> 2\varepsilon$ to the local maxima of $\hat{f}$.
\end{restatable}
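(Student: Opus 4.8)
The plan is to use Theorem~\ref{thm:persistence} to transfer persistence pairs from $f$ to $\hat f$ via the bottleneck distance bound $d_B(D(f), D(\hat f)) \le \|f - \hat f\|_\infty < \varepsilon$. Recall that a local maximum $(x,y)$ of $f$ with persistence $\rho$ corresponds to a point $(a, b) \in D(f)$ with $|a - b| = \rho$, where, with our sign convention (local maxima close holes in the sublevel set), the ``death'' coordinate is $f(x,y)$ and the ``birth'' coordinate is the value of the paired saddle, so the point lies at (vertical) distance $\rho/2$ from the diagonal in the $L_\infty$ metric. A point of persistence $\rho > 2\varepsilon$ therefore lies at $L_\infty$-distance $> \varepsilon$ from the diagonal.

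First I would fix an optimal (or near-optimal, if the infimum is not attained) bijection $\gamma$ realizing a bottleneck distance $< \varepsilon$ between $D(f)$ and $D(\hat f)$. For any off-diagonal point $p \in D(f)$ of persistence $> 2\varepsilon$, its image $\gamma(p)$ satisfies $\|p - \gamma(p)\|_\infty < \varepsilon$; since $p$ is at $L_\infty$-distance $> \varepsilon$ from the diagonal, $\gamma(p)$ cannot be a diagonal point, so it is a genuine persistence pair of $\hat f$. Moreover the map $p \mapsto \gamma(p)$ restricted to these high-persistence points is injective because $\gamma$ is a bijection. This gives an injection from high-persistence persistence pairs of $f$ into persistence pairs of $\hat f$.

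The remaining step is to argue that this yields an injection on the level of \emph{local maxima} specifically, rather than arbitrary persistence pairs. Here I would use that in a 2d function every local maximum is the ``death'' endpoint of exactly one persistence pair (the one closing the last hole it is responsible for), and conversely the upper/right endpoint (the one farther from the appropriate side of the diagonal in the death coordinate) of a persistence pair corresponds to either a local maximum or a saddle; in the persistence diagram the local-maximum pairs are exactly those whose death value equals a critical value of maximum type. Since $\gamma$ moves each such point by less than $\varepsilon < \rho/2$, and since $\hat f$ may a priori have created or destroyed saddle/maximum distinctions, one must be slightly careful: the cleanest route is to observe that the statement only claims an injection into the local maxima of $\hat f$, and every persistence pair of $\hat f$ has a well-defined death point which is a local maximum of $\hat f$ (in $\Reals^2$, holes in sublevel sets are always closed by maxima). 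So composing $p \mapsto \gamma(p) \mapsto (\text{death point of } \gamma(p))$, and using that distinct persistence pairs of $\hat f$ have distinct death points (each local maximum dies once), produces the desired injection into the local maxima of $\hat f$.

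The main obstacle I anticipate is the bookkeeping around the sign/orientation convention: the paper pairs local maxima with saddles via holes in the sublevel set (an ``extended persistence''-flavored convention rather than the textbook sublevel-set convention where maxima pair with saddles going \emph{down}), so I need to make sure the factor of $2$ comes out correctly — namely that persistence $\rho$ translates to $L_\infty$-distance exactly $\rho/2$ to the diagonal, which is why the threshold $2\varepsilon$ (not $\varepsilon$) is the right one. A secondary subtlety is whether the infimum in $d_B$ is attained; if not, one takes a bijection achieving distance $< \varepsilon$ (possible since $\|f-\hat f\|_\infty < \varepsilon$ strictly, or by a limiting argument), which suffices for all the strict inequalities above. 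Everything else is routine.
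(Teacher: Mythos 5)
Your proposal follows essentially the same route as the paper: invoke the stability theorem (Theorem~\ref{thm:persistence}), push each high-persistence point of $D(f)$ through a (near-)optimal bottleneck matching, and observe that such a point lies at $L_\infty$-distance more than $\varepsilon$ from the diagonal, so its image is off-diagonal and hence a genuine persistence pair of $\hat f$. You are in fact more explicit than the paper on two points the paper glosses over: injectivity (you get it from the matching being a bijection plus distinct pairs having distinct death points) and the case where the infimum in $d_B$ is not attained.

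However, one claim in your final step is wrong as stated: under the paper's definition the persistence diagram also contains minimum--saddle pairs coming from component merges in the sublevel sets, and those pairs die at saddles, not at local maxima. So ``every persistence pair of $\hat f$ has a well-defined death point which is a local maximum of $\hat f$'' is false, and your composed map $p \mapsto \gamma(p) \mapsto (\text{death point of } \gamma(p))$ could a priori land on a saddle of $\hat f$ if $\gamma$ matches a saddle--maximum pair of $f$ to a minimum--saddle pair of $\hat f$. The clean fix --- which the paper itself leaves implicit --- is to apply the stability result of Cohen-Steiner~\etal dimension by dimension: matching only the hole-closing (saddle--maximum, i.e.\ degree-one) diagrams of $f$ and $\hat f$ guarantees that the matched off-diagonal point is itself a saddle--maximum pair of $\hat f$, whose death point is a local maximum, after which your injectivity argument goes through unchanged. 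With that repair your argument is correct and coincides with the paper's.
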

\begin{proof}
Consider the functions $f$ and $\hat{f}$. Since $|f(x, y) - \hat{f}(x, y)| < \varepsilon$ for all $(x, y) \in \Reals^2$, we have that $\|f - \hat{f}\|_\infty \leq \varepsilon$. By Theorem~\ref{thm:persistence} this implies that $d_B(D(f), D(\hat{f})) \leq \varepsilon$. Now consider a local maximum of $f$ with persistence at least $2 \varepsilon$. This critical point must hence be involved in a persistence pair $((x,y), (x', y'))$ such that $f(x', y') - f(x, y) \geq 2 \varepsilon$. This persistence pair corresponds to a point $(z, z')$ in the persistence diagram $D(f)$ with $z + 2 \varepsilon \leq z'$. Since $d_B(D(f), D(\hat{f}) \leq \varepsilon$, there must exist a point $(\hat{z}, \hat{z}') \in D(\hat{f})$ such that $\|(z, z') - (\hat{z}, \hat{z}')\|_\infty \leq \varepsilon$. As $z + 2 \varepsilon \leq z'$, the point $(\hat{z}, \hat{z}')$ cannot lie on the diagonal and must hence correspond to a persistence pair of $\hat{f}$. Thus, we can say that the local maxima of $f$ of persistence $>2\varepsilon$ can be mapped to local maxima in $\hat{f}$.
\end{proof}

\section{Volume-based quadtree}\label{sec:volume-quadtree}
In this section we analyze the approximation of a continuous, two-dimensional function $f$ by a quadtree $T$. Specifically, we prove certain properties on the structure of $T$ and how well the corresponding function $f_T$ approximates $f$.
We are given a function $f\colon \dom \rightarrow \Reals^+$ on a bounded square domain $\dom = [0, D] \times [0, D]$ of size $D$. Without loss of generality, we assume that the total volume under the function $f$ (over the whole domain $\dom$) is exactly $1$ (this can be achieved by scaling the domain/function). We construct the quadtree $T$ using a threshold value $\rho > 0$. Specifically, starting from the root, we subdivide a node $v \in T$ when the volume under $f$ restricted to the region $R(v)$ exceeds $\rho$, and we recursively apply this rule to all newly created child nodes. As a result, for every cell (leaf) $v \in T$, the volume under $f$ restricted to $R(v)$ is at most $\rho$. Finally, for every cell $v \in T$, we set the value $h(v)$ to the average of $f$ in $R(v)$, which implies that the volume under $f$ and $f_T$ is equal when restricted to $R(v)$. For convenience, we will denote this volume as $V(v) = s(v)^2 h(v)$. We will refer to the quadtree $T$ constructed in this manner as the \emph{volume-based quadtree} of $f$.

% \begin{figure}
%     \centering
%     \includegraphics{}
%     \caption{The function $f$ restricted to $R(v)$ (left) and the corresponding approximation $f_T$ (right). Note that the volume of these two functions is equal when restricted to $R(v)$.}
%     \label{fig:cell-approximation}
% \end{figure}

Given some constant $\varepsilon > 0$, our goal is to choose $\rho$ such that $\|f - f_T\|_\infty < \varepsilon$, that is, $|f(x, y) - f_T(x, y)| < \varepsilon$ for all $(x, y) \in \dom$. First of all, note that this is not possible for all possible functions $f$: if the slope of $f$ can be unbounded, then it is easy to construct an example where $f(x, y)$ and $f(x+\epsilon, y)$ are arbitrarily far apart, but $(x, y)$ and $(x+\epsilon, y)$ must belong to the same quadtree cell for any threshold $\rho$, assuming $\epsilon$ is chosen sufficiently small. We therefore express our bounds in terms of the Lipschitz constant\footnote{The Lipschitz constant of a function $f$ is the maximum absolute slope of $f$ in any direction.} $\lambda$ of $f$ and the maximum value $z^*$ of $f$, next to the size of the domain $D$ and the threshold value $\rho$.

We start by proving some simple properties on the complexity of $T$.

\begin{restatable}{lemma}{cellsize}\label{lem:cellsize}
    Let $T$ be the volume-based quadtree of $f\colon \dom \rightarrow \Reals^+$ with threshold $\rho$. Then, for any cell $v \in T$, we have that $s(v) > \frac{1}{2}\sqrt{\rho/z^*}$, where $z^* = \max_{(x, y) \in \dom} f(x, y)$.
\end{restatable}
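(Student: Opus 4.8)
The plan is to reason about the \emph{parent} of the cell $v$ in $T$, using the subdivision rule that defines the volume-based quadtree together with the trivial pointwise bound $f(x,y) \le z^*$.

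First I would dispose of the degenerate case where $v$ is the root $r$ of $T$; then $s(v) = D$. The root can be a leaf only if the volume under $f$ over all of $\dom$, which equals $1$ by the normalization assumption, does not exceed $\rho$, i.e.\ $\rho \ge 1$. In that situation the bound follows from $1 = \int_\dom f \le z^* D^2$, which gives $D \ge \sqrt{1/z^*} \ge \tfrac12\sqrt{\rho/z^*}$ as long as $\rho \le 4$; in the regime of interest ($\rho$ chosen small, so that $f_T$ approximates $f$ well) this is always the case, so the root never causes trouble.

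For the main case, let $v$ be a non-root cell and let $u$ be its parent. Since $u$ has children, the construction must have subdivided $u$, which by definition of the volume-based quadtree means that the volume under $f$ restricted to $R(u)$ strictly exceeds $\rho$. On the other hand, since $f(x,y) \le z^*$ everywhere on $\dom$ and $R(u)$ has area $s(u)^2$, that same volume is at most $z^* s(u)^2$. Chaining the two inequalities gives $\rho < z^* s(u)^2$, hence $s(u) > \sqrt{\rho/z^*}$, and since $v$ is a child of $u$ we have $s(v) = s(u)/2 > \tfrac12\sqrt{\rho/z^*}$, as claimed.

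I do not expect a genuine obstacle here: the argument is essentially a one-line application of the subdivision rule combined with the pointwise upper bound on $f$, and the strictness of the final inequality is inherited automatically from the strict inequality $\rho < z^* s(u)^2$. The only point requiring a little care is the root/edge case treated above, which is why I would state the lemma with the (implicit) understanding that $\rho$ lies in the meaningful range.
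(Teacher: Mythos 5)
Your argument is correct and is essentially the paper's own proof: both bound the volume over the parent cell from above by $z^* s(u)^2$ using $f \le z^*$, compare with the subdivision threshold $\rho$, and halve to get the bound on $s(v)$ (the paper phrases it as a contradiction, you argue directly). Your explicit treatment of the root-as-leaf case is a small extra care the paper silently skips, and your caveat that it needs $\rho$ in the meaningful (small) range is reasonable.
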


\begin{proof}
    Assume for the sake of contradiction that there is a cell $v \in T$ with $s(v) \leq \frac{1}{2}\sqrt{\rho/z^*}$, and let $w$ be the parent of $v$ in $T$. By construction of $T$, the volume under $f$ when restricted to $R(w)$ must exceed $\rho$. As $s(w) \leq \sqrt{\rho/z^*}$, this implies that the average value of $f$ in $R(w)$ is more than $\rho / s(w)^2 = z^*$, which is clearly a contradiction. 
\end{proof}

\begin{corollary}\label{cor:quaddepth}
Let $T$ be the volume-based quadtree of $f\colon [0, D]^2 \rightarrow \Reals^+$ with threshold $\rho$. Then the depth of $T$ is at most $\log\left(\frac{2 D}{\sqrt{\rho/z^*}}\right)$, where $z^* = \max_{(x, y) \in [0, D]^2} f(x, y)$.
\end{corollary}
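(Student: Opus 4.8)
The plan is to derive the bound directly from Lemma~\ref{lem:cellsize}, using the standard relationship between a node's depth and its cell size in a quadtree on $[0,D]^2$. First I would observe that the root $r$ has $s(r) = D$ and that each child has half the side length of its parent, so a node at depth $k$ has side length $D/2^k$. The depth of $T$ is, by definition, the largest depth attained by any node; since every internal node lies on a root-to-leaf path and hence has depth strictly smaller than some leaf below it, this equals the maximum depth of a leaf cell of $T$.

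Next I would apply Lemma~\ref{lem:cellsize} to an arbitrary leaf $v$ at depth $k$: it gives $s(v) > \frac{1}{2}\sqrt{\rho/z^*}$, and substituting $s(v) = D/2^k$ yields $D/2^k > \frac{1}{2}\sqrt{\rho/z^*}$, i.e.\ $2^k < 2D/\sqrt{\rho/z^*}$, and therefore $k < \log\!\left(2D/\sqrt{\rho/z^*}\right)$. Since this inequality holds for every leaf, the maximum leaf depth---and hence the depth of $T$---is strictly less than $\log\!\left(2D/\sqrt{\rho/z^*}\right)$, which in particular is at most $\log\!\left(2D/\sqrt{\rho/z^*}\right)$, as claimed.

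There is essentially no real obstacle here; the only point requiring a moment's care is the degenerate case in which the root itself is a leaf (depth $0$). But Lemma~\ref{lem:cellsize} applies to the root leaf as well, giving $D > \frac{1}{2}\sqrt{\rho/z^*}$ and hence $\log\!\left(2D/\sqrt{\rho/z^*}\right) > 0$, so the stated bound still holds trivially in that case. Consequently the corollary follows immediately, and I would keep the written proof to just these two or three sentences.
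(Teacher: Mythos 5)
Your proposal is correct and follows exactly the argument the paper intends: the corollary is stated as an immediate consequence of Lemma~\ref{lem:cellsize}, obtained by combining the lower bound $s(v) > \frac{1}{2}\sqrt{\rho/z^*}$ with the fact that a cell at depth $k$ has side length $D/2^k$. Your handling of the degenerate root-leaf case is a harmless extra remark; no gaps.
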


\begin{restatable}{lemma}{treecomplexity}\label{lem:treecomplexity}
Let $T$ be the volume-based quadtree of $f\colon [0, D]^2 \rightarrow \Reals^+$ with threshold $\rho$. Then $T$ has $O\left(\frac{1}{\rho} \log\left(\frac{D}{\sqrt{\rho/z^*}}\right)\right)$ nodes in total, where $z^* = \max_{(x, y) \in [0, D]^2} f(x, y)$.
\end{restatable}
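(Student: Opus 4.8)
The plan is to bound the number of \emph{internal} nodes of $T$ first, and then observe that the total number of nodes is within a constant factor of that, since $T$ is a quadtree in which every internal node has exactly four children (so if $I$ denotes the number of internal nodes, the number of leaves is $3I+1$ and the total is $4I+1$).

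To bound the internal nodes, I would group them by depth. The key observation is that, by construction of the volume-based quadtree, a node is subdivided (i.e.\ becomes internal) only if the volume under $f$ restricted to its region exceeds $\rho$; hence every internal node $v$ satisfies $\int_{R(v)} f \,>\, \rho$. Now fix a depth $i$. All nodes of $T$ at depth $i$ have pairwise interior-disjoint regions (they form a partition of a sub-region of $\dom$), so in particular the internal nodes at depth $i$ have interior-disjoint regions. Summing their $f$-volumes and using that the total volume under $f$ over $\dom$ is $1$, we get that the number of internal nodes at depth $i$ is strictly less than $1/\rho$.

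It remains to multiply this per-depth bound by the number of depths. By Corollary~\ref{cor:quaddepth}, the depth of $T$ is at most $\log\!\left(\frac{2D}{\sqrt{\rho/z^*}}\right)$, so there are $O\!\left(\log\!\left(\frac{D}{\sqrt{\rho/z^*}}\right)\right)$ distinct depth levels (absorbing the additive constant coming from the factor $2$ inside the logarithm into the asymptotic notation, which is valid in the nontrivial regime where $D \ge \sqrt{\rho/z^*}$; in the remaining degenerate case $T$ has $O(1/\rho)$ nodes anyway). Combining, the number of internal nodes is $O\!\left(\frac{1}{\rho}\log\!\left(\frac{D}{\sqrt{\rho/z^*}}\right)\right)$, and therefore so is the total number of nodes.

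I expect no serious obstacle here; the argument is a standard "charging by depth" count. The only points requiring a little care are (i) making the disjointness-of-regions argument at a fixed depth precise, and (ii) handling the cosmetic discrepancy between $\log\!\left(\frac{2D}{\sqrt{\rho/z^*}}\right)$ from Corollary~\ref{cor:quaddepth} and the $\log\!\left(\frac{D}{\sqrt{\rho/z^*}}\right)$ appearing in the statement, together with the trivial small-tree corner case.
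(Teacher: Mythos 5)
Your proof is correct and follows essentially the same route as the paper: both arguments rest on the fact that every internal node has $f$-volume exceeding $\rho$, use interior-disjointness to get at most $1/\rho$ such nodes (you count per depth level, the paper counts the leaves of $T$ with its leaves stripped off), and then multiply by the depth bound of Corollary~\ref{cor:quaddepth} before converting to the total node count via the four-children property. Your handling of the factor $2$ inside the logarithm and the degenerate small-domain case is in fact slightly more careful than the paper's own write-up.
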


\begin{proof}
Let $T'$ be the quadtree obtained by removing all leaves from $T$. By construction of $T$, the volume under $f$ restricted to $R(w)$ for any leaf node $w \in T'$ must exceed $\rho$. As the regions corresponding to all leaf nodes of $T'$ must be interior disjoint, $T'$ can have at most $\frac{1}{\rho}$ leaf nodes (recall that the total volume under $f$ is assumed to be $1$). Using Corollary~\ref{cor:quaddepth} we can then directly conclude that $T'$ contains at most $\frac{1}{\rho} \log\left(\frac{D}{\sqrt{\rho/z^*}}\right)$ nodes in total. As every node can have at most $4$ children, $|T| \leq 4 |T'|$ and the result follows.      
\end{proof}

Next, we investigate how well $f_T$ approximates the function $f$. To establish meaningful bounds, we first need bounds that relate the function values of $f$ to the volume under $f$. 

\begin{lemma}\label{lem:vol-bounds}
Let $f\colon \dom \rightarrow \Reals^+$ be a 2-dimensional function with Lipschitz constant $\lambda$, and let $R$ be a square region in $\dom$ with side length $s$. If $f(x, y) = z$ for some $(x, y) \in R$, then the volume $V$ under $f$ restricted to $R$ satisfies:
\begin{enumerate}[label=(\alph*)]
    \item If $z < \sqrt{2} \lambda s$, then $V \geq \frac{z^3}{6 \lambda^2}$,
    \item If $z \geq \sqrt{2} \lambda s$, then $V \geq s^2 (z - \frac{2\sqrt{2}}{3} \lambda s)$, 
    \item $V \leq s^2 (z + \frac{2\sqrt{2}}{3} \lambda s)$. 
\end{enumerate}
\end{lemma}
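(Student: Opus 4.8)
The plan is to derive all three inequalities from one auxiliary estimate: an upper bound on $\int_R \|p-q\|\,dp$, the integral over the square $R$ of the Euclidean distance to a fixed point $q$. Write $q = (x,y)$ for the point with $f(q)=z$. The Lipschitz condition gives the pointwise bounds $z - \lambda\|p-q\| \le f(p) \le z + \lambda\|p-q\|$ for every $p\in R$, and since $f\ge 0$ the lower bound can only be helped by this. Integrating over $R$ and using $\mathrm{area}(R)=s^2$, both the upper bound (c) and the lower bounds reduce to controlling $\int_R\|p-q\|\,dp$.

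First I would prove that $\int_R\|p-q\|\,dp \le \tfrac{2\sqrt 2}{3}s^3$ for \emph{every} $q\in R$. Since $\|p-q\| \le \sqrt2\,\|p-q\|_\infty$, it suffices to show $\int_R\|p-q\|_\infty\,dp \le \tfrac23 s^3$. For each fixed $p$ the map $q\mapsto\|p-q\|_\infty$ is convex (a norm composed with an affine map), so $q\mapsto\int_R\|p-q\|_\infty\,dp$ is convex on the square $R$ and attains its maximum at a vertex of $R$; by symmetry it then suffices to evaluate the integral with $q$ placed at a corner, which is the elementary computation $\int_0^s\!\!\int_0^s \max(u,v)\,du\,dv = \tfrac23 s^3$ (split along $u\ge v$ and $u<v$). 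Multiplying by $\sqrt2$ gives the claimed bound.

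Given this, (c) is immediate: $V = \int_R f\,dp \le z s^2 + \lambda\int_R\|p-q\|\,dp \le s^2\bigl(z + \tfrac{2\sqrt2}{3}\lambda s\bigr)$. Similarly $V \ge z s^2 - \lambda\int_R\|p-q\|\,dp \ge s^2\bigl(z - \tfrac{2\sqrt2}{3}\lambda s\bigr)$, which is exactly (b) (this inequality in fact holds for any $z$, but is only informative in the stated regime $z\ge\sqrt2\lambda s$). For (a) the key idea is to reduce to (b) rather than integrate a cone clipped by $R$: since $z < \sqrt2\lambda s$, the length $s' := z/(\sqrt2\lambda)$ satisfies $s'<s$, so $R$ contains a sub-square $R'$ of side length $s'$ with $q\in R'$. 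Applying (b) to $R'$ — whose hypothesis $z\ge\sqrt2\lambda s'$ holds with equality — and using $V \ge \int_{R'} f\,dp$ yields $V \ge (s')^2\bigl(z - \tfrac{2\sqrt2}{3}\lambda s'\bigr) = \tfrac{z^2}{2\lambda^2}\cdot\tfrac{z}{3} = \tfrac{z^3}{6\lambda^2}$, as required.

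The only step needing real care is the auxiliary estimate: crude bounds such as $\|p-q\|\le\sqrt2\,s$ (the diameter of $R$) or $\|p-q\|\le\|p-q\|_1$ are too lossy to produce the constant $\tfrac{2\sqrt2}{3}$, so one genuinely needs the $\|\cdot\|_\infty$ comparison together with the convexity/corner reduction to pin down the right coefficient. The part that at first looks like the main obstacle — case (a), where the ``cone'' $z-\lambda\|p-q\|$ around $q$ can protrude from $R$ — then dissolves through the sub-square reduction above, so no explicit polar-coordinate integration of a clipped disk is ever needed.
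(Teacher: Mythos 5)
Your proof is correct, and it reaches all three bounds with exactly the right constants; it is, however, organized quite differently from the paper's argument. The paper proves (a) and (b) by placing a cone of slope $\lambda$ with apex at $(x,y,z)$ beneath the graph, asserting the restricted volume is smallest when the apex sits at a corner of $R$, inscribing in the cone the pyramid of slope $\lambda$ along its diagonals (i.e.\ the $\ell_\infty$-pyramid $z-\sqrt{2}\lambda\|p-q\|_\infty$), and then explicitly computing the clipped volume in two cases (pyramid base inside $R$, giving $\frac{z^3}{6\lambda^2}$, versus box plus quarter pyramid, giving $s^2(z-\frac{2\sqrt{2}}{3}\lambda s)$); bound (c) is obtained symmetrically with an inverse pyramid. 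You instead integrate the pointwise Lipschitz bounds $z\pm\lambda\|p-q\|$ over $R$ and funnel everything through the single estimate $\int_R\|p-q\|\,dp\le\frac{2\sqrt{2}}{3}s^3$, proved via $\|\cdot\|_2\le\sqrt{2}\|\cdot\|_\infty$ together with convexity of $q\mapsto\int_R\|p-q\|_\infty\,dp$, which pins the worst case at a corner and yields the elementary value $\frac{2}{3}s^3$ there; then (b) and (c) are one-line consequences, and (a) follows from (b) applied to the sub-square of side $z/(\sqrt{2}\lambda)$ around the point. The underlying geometry is the same in both proofs (the Euclidean cone is compared to the $\ell_\infty$-pyramid, and the extremal position of $q$ is a corner), but your version buys a cleaner justification of the corner reduction (convexity rather than a geometric assertion) and avoids the clipped-pyramid case analysis entirely, at the cost of being less pictorial; the sub-square trick is a neat substitute for the paper's direct quarter-pyramid computation of $\frac{z^3}{6\lambda^2}$.
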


\begin{proof}
We start with proving (a) and (b). Consider any coordinates $(x, y) \in R$ and let $z = f(x, y)$. Imagine $f$ as a surface plotted in $\Reals^3$. By definition of the Lipschitz constant, we can construct a cone below $(x, y, z)$ with slope $\lambda$ such that everything inside that cone (and above the ground plane $z = 0$) must belong to the volume under the function $f$. Now consider the volume of the part of the cone that is restricted to $R$. Since $(x,y) \in R$, this volume is minimized when $(x, y)$ lies on one of the corners of $R$, so assume that this is the case. We now approximate this cone by a pyramid with the same apex that is contained completely within the cone (see Fig.~\ref{fig:cone}). Note that this pyramid has slope $\lambda$ along the diagonals, as that is where the slope of a pyramid is minimal. Again consider the volume under this pyramid restricted to $R$. We consider two cases. If $z < \sqrt{2} \lambda s$, then the boundary of the pyramid at the ground plane is still contained within $R$ (see Fig.~\ref{fig:pyramid-in-cell}). The volume is then given by $\frac{1}{3} (\frac{z}{\sqrt{2} \lambda})^2 z = \frac{z^3}{6 \lambda^2}$. Otherwise, the volume within $R$ consists of a box with a (quarter) pyramid on top. Then the volume is $\frac{1}{3} s^2 \sqrt{2} \lambda s + (z - \sqrt{2} \lambda s) s^2 = s^2 (z - \frac{2\sqrt{2}}{3} \lambda s)$. This concludes the bounds in (a) and (b).

\begin{figure}[t]
    \centering
    \includegraphics{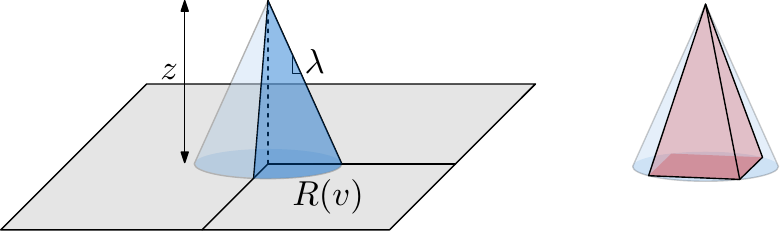}
    \caption{A cone with its apex at $(x, y, z)$ and slope $\lambda$ must be fully underneath $f$. The volume in the cone is lower-bounded by the volume of the pyramid in red.}
    \label{fig:cone}
\end{figure}

\begin{figure}[t]
    \centering
    \includegraphics{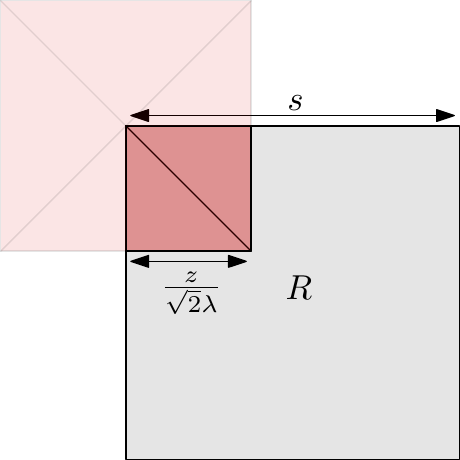}
    \caption{Top-down view of the pyramid (red) and $R$. Note that if $z < \sqrt{2}\lambda s$, then a quarter of the pyramid fits fully in $R$.}
    \label{fig:pyramid-in-cell}
\end{figure}

Now consider bound (c) in the lemma statement. Similar as above, we can argue that there is an inverse pyramid (with the same slope) going upwards from $(x, y, z)$ such that everything inside this pyramid certainly does not belong to the volume under $f$. This inverse pyramid completely contains $R$ starting at height $z + \sqrt{2}\lambda s$. Therefore, the volume $V$ under $f$ restricted to $R$ is at most $s^2 (z + \sqrt{2}\lambda s) - \frac{1}{3} s^2 \sqrt{2}\lambda s = s^2 (z + \frac{2\sqrt{2}}{3} \lambda s)$, as stated.
\end{proof}

We are now ready to give an error bound on how well the function $f_T$ obtained from the volume-based quadtree $T$ approximates the original function $f$.

\begin{lemma}\label{lem:quadtree-error}
    Let $T$ be the volume-based quadtree of $f\colon \dom \rightarrow \Reals^+$ with threshold $\rho$. Then, for any cell $v \in T$, we have that $|f(x, y) - f_T(x, y)| \leq \min\left(\frac{2\sqrt{2}}{3} \lambda s(v), \sqrt[3]{6 \lambda^2 \rho}\right)$ for all $(x, y) \in R(v)$, where $\lambda$ is the Lipschitz constant of $f$.
\end{lemma}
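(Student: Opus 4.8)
The plan is to unwind the definitions: on a cell $v$ with side length $s := s(v)$, the function $f_T$ is the constant $h(v) = V(v)/s^2$, i.e.\ the average of $f$ over $R(v)$, where $V(v)$ is the volume under $f$ on $R(v)$; by construction of the volume-based quadtree we also have the invariant $V(v)\le\rho$. So the statement just asks how far $f$ can deviate from its own average over a square of side $s$, both above and below, and Lemma~\ref{lem:vol-bounds} is precisely the tool for that. I will prove the two bounds $|f(x,y)-h(v)|\le \tfrac{2\sqrt2}{3}\lambda s$ and $|f(x,y)-h(v)|\le\sqrt[3]{6\lambda^2\rho}$ separately, after which the claimed $\min$ is immediate.

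For the first bound, fix $(x,y)\in R(v)$ and set $z=f(x,y)$. The upper estimate $h(v)-z\le\tfrac{2\sqrt2}{3}\lambda s$ is immediate from Lemma~\ref{lem:vol-bounds}(c), since $h(v)=V(v)/s^2\le z+\tfrac{2\sqrt2}{3}\lambda s$. For the lower estimate $z-h(v)\le\tfrac{2\sqrt2}{3}\lambda s$ I split on the size of $z$ at the threshold $\sqrt2\lambda s$ that separates cases (a) and (b) of Lemma~\ref{lem:vol-bounds}. If $z\ge\sqrt2\lambda s$, part (b) gives $h(v)\ge z-\tfrac{2\sqrt2}{3}\lambda s$ directly. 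If $z<\sqrt2\lambda s$, part (a) gives $h(v)\ge z^3/(6\lambda^2 s^2)$; writing $z=t\sqrt2\lambda s$ with $t\in[0,1)$, one computes $z-h(v)\le \sqrt2\lambda s\,(t-t^3/3)$, and the elementary inequality $t-t^3/3\le 2/3$ on $[0,1]$ (its derivative $1-t^2$ is nonnegative there) finishes the case.

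For the second bound I use the invariant $V(v)\le\rho$. For $z-h(v)$: if $z<\sqrt2\lambda s$, then Lemma~\ref{lem:vol-bounds}(a) together with $V(v)\le\rho$ yields $z^3\le 6\lambda^2\rho$, so $z-h(v)\le z\le\sqrt[3]{6\lambda^2\rho}$; if $z\ge\sqrt2\lambda s$, then as above $z-h(v)\le\tfrac{2\sqrt2}{3}\lambda s$, and Lemma~\ref{lem:vol-bounds}(b) with $z\ge\sqrt2\lambda s$ gives $\rho\ge s^2\bigl(z-\tfrac{2\sqrt2}{3}\lambda s\bigr)\ge\tfrac{\sqrt2}{3}\lambda s^3$, which upon cubing shows $\tfrac{2\sqrt2}{3}\lambda s\le\sqrt[3]{6\lambda^2\rho}$. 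For $h(v)-z$: combine the bound $h(v)-z\le\tfrac{2\sqrt2}{3}\lambda s$ from the first part with $h(v)-z\le h(v)=V(v)/s^2\le\rho/s^2$ (using $z\ge 0$); the maximum over $s$ of $\min\bigl(\tfrac{2\sqrt2}{3}\lambda s,\ \rho/s^2\bigr)$ is attained when the two terms coincide, where its value cubes to $\tfrac{8}{9}\lambda^2\rho\le 6\lambda^2\rho$.

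I do not expect a genuine obstacle: everything reduces to Lemma~\ref{lem:vol-bounds} and the invariant $V(v)\le\rho$, plus two one-variable optimizations (the inequality $t-t^3/3\le 2/3$ and the balance point of $\min(\tfrac{2\sqrt2}{3}\lambda s,\rho/s^2)$). The only thing needing care is performing the case split at $z=\sqrt2\lambda s$ consistently for both the upper and the lower deviation estimates, and for both target bounds, so that the bookkeeping stays clean.
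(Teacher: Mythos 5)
Your proof is correct, and the first half coincides with the paper's argument: both establish $|f-f_T|\le\frac{2\sqrt2}{3}\lambda s(v)$ on every cell by applying Lemma~\ref{lem:vol-bounds}(c) for the upward deviation and splitting at $z=\sqrt2\lambda s(v)$ (parts (b) and (a), plus the inequality $c-c^3/3\le 2/3$) for the downward one. Where you diverge is the $\sqrt[3]{6\lambda^2\rho}$ bound. The paper argues conditionally: it assumes $\sqrt[3]{6\lambda^2\rho}<\frac{2\sqrt2}{3}\lambda s(v)$ (otherwise the $\min$ is already proved), deduces $h(v)<\frac16\lambda s(v)$, combines this with the already-proved bound to conclude $f<\sqrt2\lambda s(v)$ everywhere on $R(v)$, and then uses part (a) with $V(v)\le\rho$ to trap both $f$ and its average $h(v)$ in $[0,\sqrt[3]{6\lambda^2\rho}]$, so the gap is at most that value. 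You instead prove the cube-root bound unconditionally, handling the two deviation directions separately: for $f(x,y)-h(v)$ you case-split at $\sqrt2\lambda s(v)$ and in the large-$z$ case convert $\rho\ge\frac{\sqrt2}{3}\lambda s(v)^3$ into $\frac{2\sqrt2}{3}\lambda s(v)\le\sqrt[3]{6\lambda^2\rho}$; for $h(v)-f(x,y)$ you observe it is at most $\min\bigl(\frac{2\sqrt2}{3}\lambda s(v),\rho/s(v)^2\bigr)$ and bound this by its balance-point value, whose cube is $\frac89\lambda^2\rho$. Both routes rest on the same key lemma and the invariant $V(v)\le\rho$; the paper's trick of bounding the whole range of $f$ on the cell is slightly shorter, while your casework is more self-contained (no ``assume the other term is larger'' step) and even yields a marginally better constant for the $h(v)-f$ direction.
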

\begin{proof}
We first show that $f_T(x, y) - f(x, y) \leq \frac{2\sqrt{2}}{3} \lambda s(v)$. Pick any $(x, y) \in R(v)$ and let $z = f(x, y)$. From Lemma~\ref{lem:vol-bounds}(c) it follows that $V(v) \leq s(v)^2 (z + \frac{2\sqrt{2}}{3} \lambda s(v))$. But that directly implies that $h(v) \leq z + \frac{2\sqrt{2}}{3} \lambda s(v)$, and hence $f_T(x, y) - f(x, y) \leq \frac{2\sqrt{2}}{3} \lambda s(v)$.

We now show that $f(x, y) - f_T(x, y) \leq \frac{2\sqrt{2}}{3} \lambda s(v)$. We choose coordinates $(x, y) \in R(v)$ such that $f(x, y)$ is maximized in $R(v)$, and let $z = f(x, y)$. If $z \geq \sqrt{2} \lambda s(v)$, then Lemma~\ref{lem:vol-bounds}(b) states that $V(v) \geq s(v)^2 (z - \frac{2\sqrt{2}}{3} \lambda s(v))$. But then $h(v) \geq z - \frac{2\sqrt{2}}{3} \lambda s(v)$ and hence $f(x, y) - f_T(x, y) \leq \frac{2\sqrt{2}}{3} \lambda s(v)$. Otherwise, let $z = c \sqrt{2} \lambda s(v)$ for some constant $c \in [0, 1)$. From Lemma~\ref{lem:vol-bounds}(a) it follows that $V(v) \geq \frac{z^3}{6 \lambda^2} = \frac{\sqrt{2}}{3} c^3 \lambda s(v)^3$. This also implies that $h(v) \geq \frac{\sqrt{2}}{3} c^3 \lambda s(v)$. Finally note that $c - \frac{c^3}{3} \leq \frac{2}{3}$ for all $c \in [0, 1)$, and hence $f(x, y) - f_T(x, y) \leq z - h(v) \leq \sqrt{2} \lambda s(v) (c - \frac{c^3}{3}) \leq \frac{2\sqrt{2}}{3} \lambda s(v)$.

In the remainder we can assume that $\sqrt[3]{6 \lambda^2 \rho} < \frac{2\sqrt{2}}{3} \lambda s(v) < \lambda s(v)$, for otherwise the stated bound already holds. We can rewrite this inequality (by cubing and eliminating some factors) as $6 \rho < \lambda s(v)^3$ or $\frac{\rho}{s(v)^2} < \frac{1}{6} \lambda s(v)$. We then get that $h(v) = \frac{V(v)}{s(v)^2} \leq \frac{\rho}{s(v)^2} < \frac{1}{6} \lambda s(v)$. Together with the bounds already proven above, this implies that $f(x, y) < (\frac{1}{6} + \frac{2\sqrt{2}}{3}) \lambda s(v) < \sqrt{2} \lambda s(v)$ for all $(x, y) \in R(v)$. Now let $(x, y) \in R(v)$ be the coordinates that maximize $f(x, y)$ in $R(v)$, and let $z = f(x, y)$. From Lemma~\ref{lem:vol-bounds}(a) it follows that $\rho \geq V(v) \geq \frac{z^3}{6 \lambda^2}$. But then $z^3 \leq 6 \lambda^2 \rho$, or $z \leq \sqrt[3]{6 \lambda^2 \rho}$. We thus obtain that $0 \leq f(x, y) \leq \sqrt[3]{6 \lambda^2 \rho}$ for all $(x, y) \in R(v)$, and since $h(v)$ must be the average of $f(x, y)$ over all $(x, y) \in R(v)$, we directly obtain that $|f(x, y) - f_T(x, y)| \leq \sqrt[3]{6 \lambda^2 \rho}$.     
\end{proof}

Finally, we prove properties on the (spatial) neighborhood $\mathcal{N}(v)$ of a cell $v$ in a volume-based quadtree $T$. Consider the corresponding function $f_T$. Note that a (weak) local maximum of $f_T$ corresponds to a cell $v \in T$ such that $h(v) \geq h(w)$ for all $w \in \mathcal{N}(v)$. To verify that property efficiently, we would like to show that $|\mathcal{N}(v)|$ is bounded for any local maximum $v \in T$. This is nearly true, as we show below.

\begin{lemma}\label{lem:cell-local-maxima}
Let $T$ be the volume-based quadtree of $f\colon \dom \rightarrow \Reals^+$ with threshold $\rho$, and let $\lambda$ be the Lipschitz constant of $f$. If a cell $v \in T$ satisfies $\frac{\lambda s(v)^3}{\rho} \leq \frac{90}{\sqrt{2}}$, then for all $w \in \mathcal{N}(v)$ with $h(v) \geq h(w)$ it holds that $s(w) \geq \frac{1}{4} s(v)$.
\end{lemma}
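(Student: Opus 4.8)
I would argue by contradiction. Suppose $w \in \mathcal{N}(v)$ satisfies $h(v) \ge h(w)$ but $s(w) < \tfrac14 s(v)$. Since all cell side lengths are of the form $D/2^k$, this forces $s(v)/s(w) = 2^j$ for some integer $j \ge 3$; in particular $w$ is not the root, so it has a parent $w'$ with $s(w') = 2\,s(w) \le \tfrac14 s(v)$, and $R(w)$ is one of the four quadrants of $R(w')$.

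The core of the proof is a lower bound on $h(w)$. Since $w'$ is an internal node, the construction of the volume-based quadtree guarantees that the volume under $f$ over $R(w')$ exceeds $\rho$, so the average of $f$ over $R(w')$ exceeds $\rho/s(w')^2 = \rho/(4\,s(w)^2)$. Now observe that the three quadrants of $R(w')$ other than $R(w)$ are translates of $R(w)$ by vectors of length $s(w)$, $s(w)$, and $\sqrt{2}\,s(w)$; and since $f$ is $\lambda$-Lipschitz, translating a region by a vector $u$ changes the average of $f$ over it by at most $\lambda\|u\|$ (substitute in the integral). Averaging the four quadrant averages thus gives $\tfrac{1}{s(w')^2}\!\int_{R(w')} f \le h(w) + \tfrac{2+\sqrt2}{4}\,\lambda\, s(w)$, and hence
\[
h(w) \;>\; \frac{\rho}{4\,s(w)^2} \;-\; \frac{2+\sqrt2}{4}\,\lambda\, s(w).
\]

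Since $v$ is a leaf, the volume under $f$ over $R(v)$ is at most $\rho$, so $h(v) \le \rho/s(v)^2$. Chaining $\rho/s(v)^2 \ge h(v) \ge h(w)$ with the bound above and substituting $s(v) = 2^j s(w)$, a one-line rearrangement gives $(2+\sqrt2)\,\lambda\, s(w)^3 > \rho\,(1 - 4^{1-j})$, so for $j \ge 3$ we obtain $\lambda\, s(w)^3 > \tfrac{15}{16(2+\sqrt2)}\,\rho$. On the other hand, the hypothesis $\lambda\, s(v)^3 \le \tfrac{90}{\sqrt2}\,\rho$ together with $s(w) = s(v)/2^j \le s(v)/8$ gives $\lambda\, s(w)^3 = \lambda\, s(v)^3/8^j \le \tfrac{90}{512\sqrt2}\,\rho$. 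These two bounds are incompatible (indeed $\tfrac{90}{512\sqrt2} < \tfrac{15}{64} < \tfrac{15}{16(2+\sqrt2)}$), which is the desired contradiction; hence $s(w) \ge \tfrac14 s(v)$.

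The step I expect to be the real obstacle is obtaining the lower bound on $h(w)$ with a small enough constant. The off-the-shelf estimate $h(w) \ge \tfrac{1}{s(w')^2}\!\int_{R(w')}f - \lambda\,\mathrm{diam}(R(w'))$ only produces the constant $2\sqrt2$ in front of $\lambda\, s(w)$, which is too large: with $2\sqrt2$ in place of $\tfrac{2+\sqrt2}{4}$ the two final inequalities on $\lambda\, s(w)^3$ remain compatible and no contradiction follows. Exploiting that the quadrants of $R(w')$ are exact translates of $R(w)$ — so that their $f$-averages, not just their pointwise values, lie within $\lambda$ times the translation distance of each other — is precisely what shrinks the constant enough to close the argument. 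Everything else is routine arithmetic.
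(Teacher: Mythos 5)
Your proof is correct, but it takes a different route than the paper at the key technical step. The paper also exploits that the parent $u$ of $w$ is internal (so its volume exceeds $\rho$) and that $h(w) \leq h(v) \leq \rho/s(v)^2$, but it converts the small average on $R(w)$ into a volume bound on $R(u)$ via Lemma~\ref{lem:vol-bounds}(c): it picks a point of $R(w)$ where $f \leq \rho/s(v)^2$ and uses the inverse-pyramid estimate $V(u) \leq s(u)^2\bigl(z + \tfrac{2\sqrt{2}}{3}\lambda s(u)\bigr)$, then writes $s(u) = \beta s(v)$ and derives $\frac{\lambda s(v)^3}{\rho} > \frac{3}{2\sqrt{2}}\cdot\frac{1-\beta^2}{\beta^3}$, which under the hypothesis forces $\beta > \tfrac14$ directly (no case split on $j$, and the constant $90/\sqrt{2}$ is exactly tight for that inequality at $\beta = \tfrac14$). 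You instead bound the average of $f$ over the parent by $h(w) + \tfrac{2+\sqrt{2}}{4}\lambda s(w)$ using the fact that the other three quadrants are exact translates of $R(w)$ and that averages of a $\lambda$-Lipschitz function shift by at most $\lambda$ times the translation length; this is self-contained (it never invokes Lemma~\ref{lem:vol-bounds}), and your constant $\tfrac{2+\sqrt{2}}{4}\approx 0.85$ is sharper than the effective $\tfrac{4\sqrt{2}}{3}\approx 1.89$ the paper incurs from Lemma~\ref{lem:vol-bounds}(c) with $s(u)=2s(w)$, which is why your contradiction closes with room to spare (and would even tolerate a larger threshold than $90/\sqrt{2}$). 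Your self-diagnosis is also accurate: the naive $\lambda\cdot\mathrm{diam}(R(w'))$ estimate gives only $\lambda s(w)^3 > \tfrac{15}{128\sqrt{2}}\rho$, which is compatible with the upper bound $\tfrac{90}{512\sqrt{2}}\rho$, so the translate trick (or the paper's pyramid bound) is genuinely needed. What the paper's route buys is reuse of machinery that is needed anyway for Lemmas~\ref{lem:quadtree-error}, \ref{lem:weight-quad-error} and \ref{lem:weight-local-maxima}, and a single inequality in $\beta$ valid for all neighbor sizes; what yours buys is a more elementary argument with a better constant.
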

\begin{proof}
Let $v \in T$ be a leaf node in $T$ with $\frac{\lambda s(v)^3}{\rho} \leq \frac{90}{\sqrt{2}}$ and let $w \in \mathcal{N}(v)$ be a neighboring cell of $v$ with $h(v) \geq h(w)$. We get that $h(w) \leq h(v) \leq \frac{\rho}{s(v)^2}$. In particular, there must be coordinates $(x, y) \in R(w)$ such that $f(x, y) \leq \frac{\rho}{s(v)^2}$. Now consider the parent $u$ of $w$ in $T$. Since $R(w) \subset R(u)$, we have $(x, y) \in R(u)$. Thus, we can apply Lemma~\ref{lem:vol-bounds}(c) to obtain that $V(u) \leq s(u)^2 (\frac{\rho}{s(v)^2} + \frac{2 \sqrt{2}}{3} \lambda s(u))$. We now write $s(u) = \beta s(v)$ for some constant $\beta > 0$. As $V(u) > \rho$, we obtain the following inequality:
% (using $\alpha \vcentcolon= \frac{\lambda s(v)^3}{\rho}$):
    \begin{alignat*}{2}
        && \beta^2 s(v)^2 \left(\frac{\rho}{s(v)^2} + \frac{2 \sqrt{2}}{3} \lambda \beta s(v)\right)
        &> \rho\\
        &\Rightarrow\qquad 
        & \beta^2 \rho + \frac{2 \sqrt{2}}{3} \lambda \beta^3 s(v)^3
        &> \rho \\
        &\Rightarrow\qquad
        & \frac{2 \sqrt{2}}{3} \lambda \beta^3 s(v)^3
        &> (1 - \beta^2)\rho \\
        &\Rightarrow\qquad
        & \frac{\lambda s(v)^3}{\rho} 
        &> \frac{3}{2\sqrt{2}}\frac{1 - \beta^2}{\beta^3} \\
    \end{alignat*}
Since $\frac{\lambda s(v)^3}{\rho} \leq \frac{90}{\sqrt{2}}$, we obtain that $\frac{3}{2\sqrt{2}}\frac{1 - \beta^2}{\beta^3} < \frac{90}{\sqrt{2}}$, or $\frac{1 - \beta^2}{\beta^3} < 60$. It is easy to verify that this inequality only holds for $\beta > \frac{1}{4}$. We can then directly conclude that $s(w) = \frac{1}{2}s(u) = \frac{1}{2} \beta s(v) > \frac{1}{8} s(v)$. As the ratios of sizes between cells in $T$ must always be a power of $2$, we conclude that $s(w) \geq \frac{1}{4} s(v)$.
\end{proof}

The result of Lemma~\ref{lem:cell-local-maxima} can be interpreted as follows. If a cell $v \in T$ is not very large, then it can only be a local maximum if for all cells $w \in \mathcal{N}(v)$ it holds that $s(w) \geq \frac{1}{4} s(v)$. In that case, we get that $|\mathcal{N}(v)| \leq 20$. This makes it possible to efficiently check if a cell $v \in T$ is a local maximum, assuming that $s(v)$ is not too large.

\section{From volume to points}\label{sec:disc}

We aim to maintain a volume-based quadtree for $f = \KDE_P$ over time. Most common kernels, with the exception of the uniform kernel, are Lipschitz continuous and hence the resulting KDE is also Lipschitz continuous. We scale $\KDE_P$ such that the volume underneath $\KDE_P$ is $1$. Additionally, we assume the kernel width $\sigma$ to be $1$. As a result, for common kernels, such as the cone kernel, $f = \KDE_P$ is Lipschitz continuous with a small Lipschitz constant. Furthermore, the maximum value of $f$ is bounded as well.

We want to approximate the volume under $f = \KDE_P$, as the points in $P$ are moving, via a (small) set of moving points $Q$. We use $V_R(f(t))$ to refer to the volume under a function $f$ at time $t$ restricted to a region $R$. For a chosen value $\ec > 0$, we require the following property on $Q$: for any square region $R \subseteq \dom$ and time $t$, we have that $\left|\frac{|Q \cap R|}{|Q|} - V_R(\KDE_P(t))\right| < \ec$. We plan to use $\varepsilon$-approximations to construct a suitable point set $Q$. An $\varepsilon$-approximation needs an initial point set from which to construct $Q$. We therefore first take a dense point sample $S$ under each kernel $K$ to represent its volume (see Section~\ref{sec:kernel-approx}). Then we combine the samples for the individual kernels into a set $\mathcal{S}$ which serves as the input for the $\varepsilon$-approximation that will ultimately result in $Q$ (see Section~\ref{sec:coreset}). In Section~\ref{sec:weight-quad} we then show how to replace the actual volume under $\KDE_P$ with the points in $Q$ when constructing the volume-based quadtree.

% In this section we aim to represent the volume under a function $f = \KDE_P$ over time, as the points in $P$ are moving, by a (small) set of moving points $Q$. We use $V_R(f(t))$ to refer to the volume under a function $f$ at time $t$ restricted to a region $R$. Specifically, for a chosen value $\ec > 0$, we require the following property on $Q$: for any square region $R \subseteq \dom$ and time $t$, we have that $\left|\frac{|Q \cap R|}{|Q|} - V_R(\KDE_P(t))\right| < \ec$. We aim to use $\varepsilon$-approximations to achieve this property with a small point set $Q$, but $\varepsilon$-approximations also require a point set to start with. We therefore first construct a relatively dense point sampling $S$ to represent the volume under a single kernel function. We then combine these point samplings of individual kernels to function as input for computing an $\varepsilon$-approximation for the entire function $\KDE_P$.

\subsection{Approximating a single kernel}\label{sec:kernel-approx}
Let $K\colon[-1, 1]^2 \rightarrow \Reals^+$ denote the kernel function. We aim to represent the volume under $K$ using a set of points $S$. For ease of exposition we assume in the remainder of the paper that the maximum value of the kernel $K$ is bounded by 1 (this holds for most common kernels). Note that we can ignore the time component for this approximation, as a single kernel represents only a single point with a single trajectory over time (if the approximation bound holds for all square regions at a single time $t$, then it also holds for other times by simply shifting the squares). 

To obtain $S$, we consider a regular $r \times r$ grid $G$ on the domain $[-1, 1]^2$, for some value $r$ to be chosen later. Note that the area of a single grid cell $c \in G$ is $\frac{4}{r^2}$. We construct a \emph{grid-based sampling} $S(r)$ of $K$ by arbitrarily placing $\lceil r z(c)\rceil$ points in every cell $c \in G$, where $z(c)$ is the average value of $K$ in the corresponding grid cell $c$. See Fig.~~\ref{fig:kernel-approx} for an example. We can prove the following property on $S(r)$.

\begin{figure}
    \centering
    \includegraphics{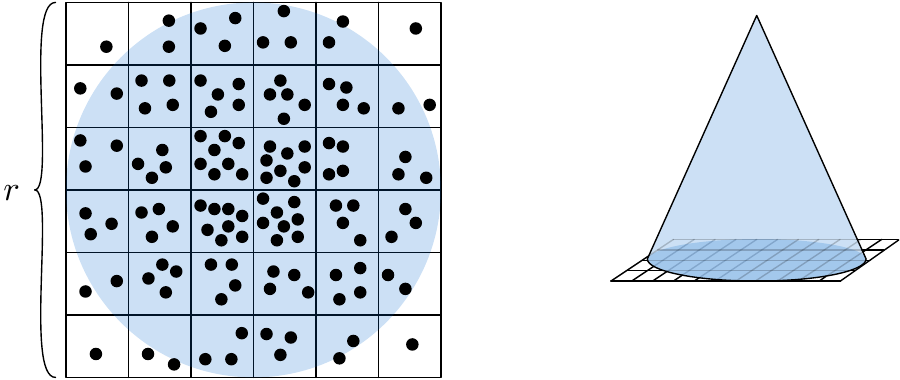}
    \caption{An example of how a cone kernel can be approximated by a point set. In each grid cell, $\lceil r z(c) \rceil$ points are arbitrarily placed.}
    \label{fig:kernel-approx}
\end{figure}

\begin{restatable}{lemma}{discrete}\label{lem:discrete}
Let $K\colon[-1, 1]^2 \rightarrow \Reals^+$ be a function such that the total volume under $K$ is $1$ and $K(x, y) \leq 1$ for all $(x, y) \in [-1, 1]^2$. If $S(r)$ is a grid-based sampling of $K$ with parameter $r$, then for any square region $R$ that overlaps with the domain of $K$ we have that $\left|\frac{|S(r) \cap R|}{|S(r)|} - V_R(K)\right| \leq \frac{36}{r}$.
\end{restatable}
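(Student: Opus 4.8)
The plan is to control the error cell by cell over the grid $G$. I would begin by pinning down the size of the sample. Since $z(c)$ is the average of $K$ over the cell $c$, which has area $\frac{4}{r^2}$, the assumption that the volume under $K$ equals $1$ gives $\sum_{c \in G} z(c) = \frac{r^2}{4}$. Combining this with $r z(c) \le \lceil r z(c) \rceil < r z(c) + 1$ yields $\frac{r^3}{4} \le |S(r)| < \frac{r^3}{4} + r^2$; in particular $|S(r)| \ge \frac{r^3}{4}$, so it suffices to show $\left| |S(r) \cap R| - |S(r)| \cdot V_R(K) \right| = O(r^2)$ with a small enough constant and then divide by $|S(r)|$.

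Next I would classify the grid cells relative to $R$: a cell is \emph{interior} if its closure lies in $R$, \emph{exterior} if its interior is disjoint from $R$, and otherwise \emph{boundary} (its interior meets both $R$ and the complement of $R$). Exterior cells contribute $0$ to both $|S(r) \cap R|$ and $V_R(K)$. For an interior cell $c$ we have $\int_{c \cap R} K = \int_c K = \frac{4}{r^2} z(c)$, and all $\lceil r z(c) \rceil$ of its sample points lie in $R$; hence its contribution to $|S(r) \cap R| - |S(r)| V_R(K)$ equals $\lceil r z(c) \rceil - \frac{4 |S(r)|}{r^2} z(c)$, which by the bounds on $|S(r)|$ and on the ceiling has absolute value at most $\max(1, 4 z(c)) \le 1 + 4 z(c)$. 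Summing over interior cells and using $\sum_c z(c) = \frac{r^2}{4}$ together with the fact that there are at most $r^2$ cells, the total interior contribution is $O(r^2)$.

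The heart of the argument is the boundary cells, where two ingredients are needed. First, a counting bound: $\partial R$ is made up of four axis-aligned segments, each of length at most $2$ inside $[-1,1]^2$, and an axis-aligned segment of length $\ell$ meets at most $O(\ell r)$ cells of a grid of spacing $\frac{2}{r}$; hence $R$ has only $O(r)$ boundary cells. Second, for a boundary cell $c$ the number of its sample points inside $R$ is between $0$ and $\lceil r z(c) \rceil \le r z(c) + 1$, while $\int_{c \cap R} K$ is between $0$ and $\frac{4}{r^2} z(c)$; so its contribution to $|S(r) \cap R| - |S(r)| V_R(K)$ has absolute value at most $\max\!\left(\lceil r z(c) \rceil,\ \frac{4 |S(r)|}{r^2} z(c)\right)$, which is $O(r z(c) + 1)$. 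Summing over the $O(r)$ boundary cells and using $z(c) \le 1$ gives a contribution of $O(r^2)$ as well.

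Adding the interior and boundary contributions gives $\left| |S(r) \cap R| - |S(r)| V_R(K) \right| = O(r^2)$, and dividing by $|S(r)| \ge \frac{r^3}{4}$ produces the desired $O(1/r)$ bound. The main remaining work — and the part I expect to be the most tedious — is to track the constants hidden in these estimates, in particular the exact number of boundary cells and the lower-order $1/r^2$ and $1/r^3$ terms arising from the ceilings and from $|S(r)|$ differing from $\frac{r^3}{4}$, carefully enough that the bound comes out with constant $36$. Finally, two minor points deserve a remark but cause no real difficulty: $R$ need not be aligned with the grid (an arbitrarily oriented segment of bounded length still meets $O(r)$ cells, affecting only the constant), and $R$ may extend beyond $[-1,1]^2$ (only the part of $\partial R$ inside the domain is relevant, and the edge-length bound still applies).
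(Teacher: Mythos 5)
Your proposal is correct and follows essentially the same route as the paper: bound $|S(r)|$ between $\frac{r^3}{4}$ and $\frac{r^3}{4}+r^2$, split the grid cells into those fully inside, fully outside, and partially overlapping $R$, control each cell's discrepancy via the ceiling in $\lceil r z(c)\rceil$, and charge the $O(r)$ partially overlapping cells using $z(c) \leq 1$. The only piece you defer, the constant, comes out by the same accounting the paper uses (at most $4/r$ from the interior cells and $32/r$ from the at most $4r$ boundary cells), so nothing substantive is missing.
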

\begin{proof}
We first show that $\frac{r^3}{4} \leq |S(r)| \leq \frac{r^3}{4} + r^2$. The volume under $K$ restricted to a single grid cell $c \in G$ is $\frac{4 z(c)}{r^2}$, and hence $\sum_{c \in G} \frac{4 z(c)}{r^2} = 1$ or $\sum_{c \in G} z(c) = \frac{r^2}{4}$. Since we place $\lceil r z(c)\rceil$ points per grid cell, we have that $|S(r)| = \sum_{c \in G} \lceil r z(c)\rceil$. Therefore, $\sum_{c \in G} r z(c) \leq |S(r)| \leq \sum_{c \in G} (r z(c) + 1)$. This directly implies a lower bound of $\frac{r^3}{4}$ on $|S(r)|$. For the upper bound we get that $\sum_{c \in G} (r z(c) + 1) = r \sum_{c \in G} z(c) + |G| = \frac{r^3}{4} + r^2$, as claimed. 

Now let $V(c) = \frac{4 z(c)}{r^2}$ indicate the volume under $K$ restricted to a grid cell $c \in G$. For any grid cell $c \in G$ we get that $\frac{\lceil r z(c)\rceil}{|S(r)|} \geq \frac{r z(c)}{\frac{r^3}{4} + r^2} = \frac{1}{1 + \frac{4}{r}} \frac{4 z(c)}{r^2} = \frac{1}{1 + \frac{4}{r}} V(c)$. On the other hand we have that $\frac{\lceil r z(c)\rceil}{|S(r)|} \leq \frac{r z(c) + 1}{\frac{1}{4} r^3} = \frac{4 z(c)}{r^2} + \frac{4}{r^3} = V(c) + \frac{4}{r^3}$.

Now let $R$ be any square region that overlaps $[-1, 1]^2$. If a grid cell $c \in G$ lies completely outside of $R$, then the corresponding points in $c$ are excluded from $S(r) \cap R$, and the volume under $K$ restricted to $c$ is not part of $V_R(K)$, and hence no error is made with respect to this grid cell. Now consider the set of grid cells $C \subseteq G$ that lie completely within $R$. The error with respect to those grid cells is $|\sum_{c \in C}(\frac{\lceil r z(c)\rceil}{|S(r)|} - V(c))|$. Using the bounds above, we get that $\frac{\lceil r z(c)\rceil}{|S(r)|} \geq \frac{1}{1 + \frac{4}{r}} V(c)$ and thus $\sum_{c \in C}(V(c) - \frac{\lceil r z(c)\rceil}{|S(r)|}) \leq V(C) (1 - \frac{1}{1 + \frac{4}{r}}) = \frac{4}{r+4} V(C)$, where $V(C)$ is the total volume under $K$ for all grid cells $c \in C$. Since $V(C) \leq 1$, this gives an additive error of at most $\frac{4}{r+4} \leq \frac{4}{r}$. On the other hand, we have that $\frac{\lceil r z(c)\rceil}{|S(r)|} \leq V(c) + \frac{4}{r^3}$ and thus $\sum_{c \in C}(\frac{\lceil r z(c)\rceil}{|S(r)|} - V(c)) \leq |C| \frac{4}{r^3}$. Since $|C| \leq r^2$ we again obtain an additive error of at most $\frac{4}{r}$. Finally, consider the grid cells that only partially overlap with $R$. The error for such a cell $c \in G$ is bounded by $\max(V(c), \frac{\lceil r z(c) \rceil}{|S(r)|}) \leq V(c) + \frac{4}{r^3}$. It is easy to see that there can be at most $4 r$ grid cells that partially overlap with $R$. Since by construction, $z(c) \leq 1$, we must have $V(c) \leq \frac{4}{r^2}$. This means that the total error with respect to these cells is at most $4 r (\frac{4}{r^2} + \frac{4}{r^3}) \leq \frac{32}{r}$ (for $r \geq 1$). Thus, the total error is at most $\frac{4}{r} + \frac{32}{r} = \frac{36}{r}$ as claimed. 
\end{proof}

Now, for a chosen error $\es > 0$, we can simply choose $r = \frac{36}{\es}$ to obtain a grid-based sampling $S$ with $O(\frac{1}{\es^3})$ points that approximates the volume under $K$ with error at most $\es$, according to Lemma~\ref{lem:discrete}.

\subsection{Coreset}\label{sec:coreset}
We now use the results of Section~\ref{sec:kernel-approx} to construct an approximation for the volume under $\KDE_P$, as the points in $P$ are moving. For a chosen error $\es > 0$, we construct a grid-based sampling $S_p$ of $O(\frac{1}{\es^3})$ points around each point $p \in P$, resulting in $O(\frac{n}{\es^3})$ points in total. We let the points in $S_p$ move in the same direction as the corresponding point $p \in P$. Note that the complete set of points $\mathcal{S} = \bigcup_{p \in P} S_p$ provides an approximation for the volume under $\KDE_P$ with error at most $\es$ for all times $t$.

We now use the algorithm by Agarwal~\etal~\cite{approx-rangespace} to construct an $\varepsilon$-approximation $Q$ of $\mathcal{S}$. For completeness, we briefly review the algorithm here in order to apply it to our setting. To compute an $\varepsilon$-approximation for a range space $X = (S, \mathcal{R})$, they first build a balanced binary tree on the points in $S$. Then, the $\varepsilon$-approximation is computed in a bottom-up fashion, where at each node in the tree an $\varepsilon$-approximation is computed of the points in the subtree rooted at that node. To compute the $\varepsilon$-approximation at a node in the tree, the $\varepsilon$-approximations of the two child nodes are first simply merged. This does not introduce an error. Then, if the newly obtained $\varepsilon$-approximation contains more than $\mu = \frac{c}{\varepsilon^2} (\log n \log\left(\frac{\log n}{\varepsilon}\right))^2$ points (for some large enough constant $c$), a halving step is performed which reduces the size of the $\varepsilon$-approximation by half. This can be done in $O(\mu^{d+1})$ time, where $d$ is the VC-dimension of $X$, and doing so introduces an error of $O(\frac{\log \mu}{\sqrt{\mu}})$. Finally, the root contains an $\varepsilon$-approximation for $X$, but may contain too many points. Further halving steps are then applied until the $\varepsilon$-approximation has size $O(\frac{1}{\varepsilon^2} \log\left(\frac{1}{\varepsilon}\right))$. The result is summarized in the lemma below.

\begin{lemma}[\hspace{-1sp}\cite{approx-rangespace}]\label{lem:agarwal-coreset}
    Given a range space $X = (S, \mathcal{R})$ of VC-dimension $d$ and a parameter $\varepsilon$, we can compute an $\varepsilon$-approximation of $X$ of size $O(\frac{1}{\varepsilon^2}\log(\frac{1}{\varepsilon}))$ in time $O(n\mu^d)$, where $|S| = n$ and $\mu = \frac{c}{\varepsilon^2}\left(\log n \log\left(\frac{\log n}{\varepsilon}\right)\right)^2.$
\end{lemma}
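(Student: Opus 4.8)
The statement is essentially a packaging of the algorithm of Agarwal~\etal~\cite{approx-rangespace} recalled in the paragraph above, so the proof reduces to verifying three claims about the point set $Q$ returned at the root of the binary tree: (i) $Q$ is an $\varepsilon$-approximation of $X$; (ii) $|Q| = O(\tfrac{1}{\varepsilon^2}\log\tfrac{1}{\varepsilon})$; and (iii) the total running time is $O(n\mu^d)$. The plan is to follow a single ``current error'' parameter up the tree and then along the final chain of halvings at the root. Throughout we may assume $n$ is a power of two (pad $S$ with dummy points placed far outside all relevant ranges), so that the tree is perfectly balanced and every merge combines two approximations of equal-size subsets.

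For (i) I would rely on two facts. First, the union of an $\alpha$-approximation of $S_1$ and an $\alpha$-approximation of a disjoint equal-size set $S_2$ is an $\alpha$-approximation of $S_1\cup S_2$, so the merge steps contribute no error. Second (the halving primitive underlying~\cite{approx-rangespace}): a set of size $m$ admits a halving to size $\lceil m/2\rceil$ that increases the approximation error by $O(\tfrac{\log m}{\sqrt m})$ and can be computed in $O(m^{d+1})$ time for a range space of VC-dimension $d$. Now fix any leaf; the error of the approximation stored at the root is at most the sum of the halving errors incurred along the root-to-leaf path, which has length $O(\log n)$, each halving on it applied to a set of size $\Theta(\mu)$. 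With $\mu = \tfrac{c}{\varepsilon^2}\bigl(\log n\,\log\tfrac{\log n}{\varepsilon}\bigr)^2$ we have $\sqrt\mu = \Theta\!\bigl(\tfrac{\sqrt c}{\varepsilon}\log n\,\log\tfrac{\log n}{\varepsilon}\bigr)$ while $\log\mu = O\!\bigl(\log\tfrac{\log n}{\varepsilon}\bigr)$, so each halving contributes $O\!\bigl(\tfrac{\varepsilon}{\sqrt c\,\log n}\bigr)$ and the $O(\log n)$ of them sum to $O(\tfrac{\varepsilon}{\sqrt c})$; picking $c$ large enough makes this at most $\varepsilon/2$. The extra halvings performed at the root to shrink $Q$ down to the target size act on sets whose sizes decrease geometrically, so their errors are dominated by the first term $O(\tfrac{\log\mu}{\sqrt\mu})$ and are likewise absorbed into $\varepsilon/2$ by enlarging $c$; therefore $Q$ is an $\varepsilon$-approximation.

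Claim (ii) is immediate because the final root halvings are carried out precisely until $|Q| = O(\tfrac{1}{\varepsilon^2}\log\tfrac{1}{\varepsilon})$; one only checks, again via the halving-error estimate, that this stopping size is consistent with the error budget spent above, which it is. For (iii): the $O(n)$ merges take time linear in the sizes they combine, which sums to $O(n\log\mu)$ overall. A halving inside the tree is triggered only when a node's merged set exceeds $\mu$ points, which forces that node's subtree to contain $\Omega(\mu)$ leaves; there are only $O(n/\mu)$ such nodes, each halving costs $O(\mu^{d+1})$, so together they contribute $O(n\mu^d)$. The halvings in the final chain at the root act on geometrically shrinking sets of size at most $2\mu$, costing $O(\mu^{d+1}) = O(n\mu^d)$ (using $\mu \le n$; if $\mu > n$ then no tree halving occurs and one bounds the chain directly against $n^{d+1}\le n\mu^d$). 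Summing these contributions yields the claimed $O(n\mu^d)$ time bound.

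The delicate point is the error bookkeeping in (i): one must commit to the exact form of $\mu$ and verify that the per-halving error $O(\log\mu/\sqrt\mu)$ — multiplied by the $O(\log n)$ halvings along a root-to-leaf path and summed against the geometric tail of the root halvings — stays below $\varepsilon$. The unusual shape of $\mu$, with the extra $\bigl(\log n\,\log\tfrac{\log n}{\varepsilon}\bigr)^2$ factor beyond the information-theoretic $1/\varepsilon^2$, is exactly what is needed to beat both sources of accumulation at once, and making the dependence of $\log\mu$ on $n$ and $\varepsilon$ explicit is the one genuinely careful step; everything else is routine once the halving primitive of~\cite{approx-rangespace} is taken as a black box.
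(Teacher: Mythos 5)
First, note that the paper does not actually prove this lemma: it is imported verbatim from Agarwal et al.~\cite{approx-rangespace}, and the ``proof'' in the paper is only the one-paragraph summary of the merge--reduce algorithm (balanced tree, merge children's approximations, halve whenever the size exceeds $\mu$, then keep halving at the root). Your reconstruction follows exactly that outline, and your accounting of the in-tree part (merges error-free, at most $O(\log n)$ halvings per root-to-leaf path each on a set of size $\Theta(\mu)$, hence total error $O(\log n \cdot \frac{\log \mu}{\sqrt{\mu}})$ which the choice of $\mu$ drives below $\varepsilon/2$; $O(n/\mu)$ in-tree halvings at $O(\mu^{d+1})$ each, giving $O(n\mu^{d})$ time) is sound and is the intended argument.

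However, there is a genuine error in your treatment of the final chain of halvings at the root. You claim these errors ``are dominated by the first term $O(\frac{\log\mu}{\sqrt{\mu}})$'' and can be ``absorbed \dots by enlarging $c$.'' The domination goes the other way: the per-halving error $O(\frac{\log m}{\sqrt{m}})$ \emph{increases} as $m$ shrinks, so the chain from size $\mu$ down to the target size $k_0 = \Theta(\frac{1}{\varepsilon^2}\log\frac{1}{\varepsilon})$ is dominated by its \emph{last} halving, whose error is $\Theta(\frac{\log k_0}{\sqrt{k_0}}) = \Theta(\varepsilon\sqrt{\log\frac{1}{\varepsilon}})$ --- independent of $c$, so enlarging $c$ (i.e.\ enlarging $\mu$) cannot absorb it, and with the error primitive as you (and the paper) state it, stopping at size $\Theta(\frac{1}{\varepsilon^2}\log\frac{1}{\varepsilon})$ does not keep the error at $O(\varepsilon)$. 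To close this gap you must either use the sharper per-halving bound coming from the discrepancy of a bounded-VC set system (error roughly $O(\sqrt{\frac{\log m}{m}})$ per halving, which at $m = \Theta(\frac{1}{\varepsilon^2}\log\frac{1}{\varepsilon})$ is $\Theta(\varepsilon)$ and makes the geometric tail work), as Agarwal et al.\ do, or accept a final size larger by a $\log\frac{1}{\varepsilon}$ factor. This is precisely the delicate point you flag at the end, but the resolution you give for it is incorrect as written; the rest of the argument (size by construction, merge costs $O(n\log\mu)$, root-chain halvings costing $O(\mu^{d+1}) = O(n\mu^{d})$ when $\mu \le n$) is fine.
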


Now assume we aim to compute a coreset $Q$ that approximates the volume under $\KDE_P$ over time with an additive error of $\ec > 0$. We first compute a grid-based sampling $S_p$ for a single point $p \in P$ with $\es = \frac{\ec}{3}$. Let $\mathcal{S} = \bigcup_{p \in P} S_p$ be the total set of sampling points. Next, we replace a single grid-based sampling $S_p$ by an $\varepsilon$-approximation $S'_p$ of $S_p$ by running the algorithm of Agarwal~\etal on $\frac{1}{\es^3}$ points with $\varepsilon = \frac{\ec}{3}$. We use a copy of the resulting $\varepsilon$-approximation $S'_p$ for all other points in $P$ as well, resulting in a total set $\mathcal{S}'$ consisting of $O(\frac{n}{\varepsilon^2} \log\left(\frac{1}{\varepsilon}\right))$ points. Then, we again compute an $\varepsilon$-approximation $Q$ with the algorithm of Agarwal~\etal, but now on $\mathcal{S}'$, again with $\varepsilon = \frac{\ec}{3}$. As we already have $\varepsilon$-approximations for the grid-based samplings of individual points, we perform only at most $n$ halving steps.

Thus, assuming $n > \frac{1}{\es^3}$, by Lemma~\ref{lem:agarwal-coreset} we can compute $Q$ in $O(n \mu^d)$ time, where $d$ is the VC-dimension of $X = (\mathcal{S}, \mathcal{R})$. The resulting set $Q$ has size $O(\frac{1}{\ec^2} \log\left(\frac{1}{\ec}\right))$ and has an additive error of $\frac{\ec}{3}$ with respect to $\mathcal{S'}$, which has an additive error of $\frac{\ec}{3}$ with respect to $\mathcal{S}$. Since $\mathcal{S}$ approximates the volume under $\KDE_P$ with error at most $\frac{\ec}{3}$, we obtain that $Q$ approximates the volume under $\KDE_P$ with additive error at most $\ec$. 

To ensure that this algorithm works, we need to show that the range space $X = (\mathcal{S}, \mathcal{R})$ has bounded VC-dimension, where $\mathcal{R}$ contains all subsets of points in $\mathcal{S}$ that may appear in a square region $R$ at some time $t$. Note that this is non-trivial, since the points in $\mathcal{S}$ correspond to moving points (see Fig.~\ref{fig:3d-rangespace}). As already stated earlier, we assume that the points follow linear motion. We first establish a bound on the VC-dimension for points moving in $1$ dimension, before extending the result to points moving in $2$ dimensions.

\begin{lemma}\label{lem:VC-1dim}
Let $X_1 = (S_1, \mathcal{R}_1)$ be a range space where $S_1$ contains a set of $x$-monotone lines in $\Reals^2$, and $\mathcal{R}_1$ contains all subsets of lines in $S_1$ that can be intersected by a vertical line segment in $\Reals^2$. The VC-dimension of $X_1$ is $5$.     
\end{lemma}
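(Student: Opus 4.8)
The plan is to pass to the dual. Associate to each non-vertical line $\ell\colon y = ax+b$ the point $\ell^\ast = (a,b)\in\Reals^2$. Then $\ell$ meets the vertical segment $\{x_0\}\times[c,d]$ precisely when $c\le a x_0 + b\le d$, i.e.\ when $\ell^\ast$ lies in the \emph{strip} $\{(a,b):c\le ax_0+b\le d\}$ bounded by two parallel lines of slope $-x_0$. Since shattering only ever involves a finite subset $Y\subseteq S_1$, we may also let $d\to+\infty$ (or $c\to-\infty$), so the subsets of $Y^\ast=\{\ell^\ast:\ell\in Y\}$ realizable as ranges are exactly those cut out by a strip, a halfplane, the empty region, or the whole plane. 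Equivalently, $A\subseteq Y^\ast$ is realizable iff for some direction $u$ the points of $A$ form a contiguous block in the order of $Y^\ast$ by $\langle\cdot,u\rangle$; equivalently, $Y^\ast\setminus A$ splits into a part projecting entirely above $A$ and a part projecting entirely below $A$ (either part possibly empty). So it suffices to show that the strip range space in the plane has VC-dimension exactly $5$.

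For the lower bound I would take $Y^\ast$ to be the vertices $v_0,\dots,v_4$ (in cyclic order) of a regular pentagon, with $\ell_0,\dots,\ell_4$ any lines having these duals, and verify that all $2^5$ subsets are ranges. The empty set, the five singletons, and the full set are immediate; every set of consecutive pentagon vertices is a halfplane-subset and hence a range, which covers all $22$ cyclically contiguous subsets. The remaining $10$ are the five ``skip-one'' pairs $\{v_i,v_{i+2}\}$ and their complements: for the pair, aiming $u$ at $v_{i+1}$ puts $v_{i+1}$ strictly on top and $v_{i+3},v_{i+4}$ strictly below, isolating $\{v_i,v_{i+2}\}$ as a middle block; for the complementary triple, the normal cones of a regular pentagon admit a $u$ making $v_i$ the strict top and $v_{i+2}$ the strict bottom (or vice versa). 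Hence the five lines are shattered. (A figure with the pentagon and one sample strip would accompany this.)

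For the upper bound I would argue by contradiction: suppose six lines are shattered and let $p_i=\ell_i^\ast$. First, every $p_i$ must be a vertex of $\operatorname{conv}\{p_1,\dots,p_6\}$, for otherwise $p_i\in\operatorname{conv}(\{p_j:j\neq i\})$, so for every $u$ the $u$-projection of $p_i$ lies weakly between the extremes of the others, and neither a strip nor a halfplane can contain $\{p_j:j\neq i\}$ while excluding $p_i$ — so that co-singleton would not be a range. Thus the six points lie in strictly convex position; relabel them $1,\dots,6$ cyclically. Now I claim $\{1,3,5\}$ is not a range. If it were, some direction $u$ and partition $\{2,4,6\}=B\sqcup C$ would have every point of $B$ projecting above all of $\{1,3,5\}$ and every point of $C$ below; then a line normal to $u$ separates $B$ from the other five points, forcing $B$ (and likewise $C$) to be a contiguous cyclic arc. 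But the only nonempty contiguous subsets of $\{2,4,6\}$ are $\{2\},\{4\},\{6\}$, whose complements within $\{2,4,6\}$ are the non-contiguous pairs $\{4,6\},\{2,6\},\{2,4\}$, which no line can separate from the remaining four points; and $B=\emptyset$ or $C=\emptyset$ would require a line separating the non-contiguous set $\{2,4,6\}$ (resp.\ $\{1,3,5\}$) from the other three. Every case is contradictory, so $\{1,3,5\}$ is not a range and six lines cannot be shattered. Combining the two bounds gives VC-dimension exactly $5$.

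The main obstacle is the upper bound: I need the observation that membership in a strip genuinely partitions the complement into an ``above'' set and a ``below'' set (this is exactly what makes the contiguous-arc obstruction bite), and the reduction to convex position via co-singletons must be phrased so that it survives degenerate inputs (three collinear points, coincident projections). The lower bound is conceptually easy, but its bookkeeping — checking all $32$ pentagon subsets — is the fiddliest part and is best supported by a picture.
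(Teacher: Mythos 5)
Your proof is correct and follows essentially the same route as the paper: dualize the lines to points so that vertical segments become parallel strips, shatter the vertices of a regular pentagon for the lower bound, and for the upper bound reduce six shattered points to convex position and show that the alternating triple $\{1,3,5\}$ cannot be realized by a strip. The only cosmetic differences are that the paper rules out non-convex position via a point inside the hull of three others rather than your co-singleton argument, and phrases the final contradiction as a strip boundary crossing the convex-hull chain more than twice instead of your contiguous-arc formulation.
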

\begin{proof}
Consider the geometric point-line dual of $S_1$, where $a x - b$ is mapped to the point $(a, b)$ and vice versa. In that representation $S_1$ corresponds to a set of points, and a vertical segment in the primal corresponds to an infinite strip bounded by two parallel (non-vertical) lines in the dual. We can thus consider the range space $(S_1', \mathcal{R}_1')$ where $S_1'$ consists of a set of points in $\Reals^2$, and $\mathcal{R}_1'$ consists of the subsets of points that exactly lie in an infinite (non-vertical) strip.

We show that the VC-dimension of $(S_1', \mathcal{R}_1')$ is equal to $5$, which directly implies that the VC-dimension of $(S_1, \mathcal{R}_1)$ is also equal to $5$. To this end, we show that infinite strips can shatter a set of $5$ points, but not a set of $6$ points. It is easy to verify that a set of $5$ points placed at the corners of a regular pentagon can be shattered by infinite strips. We thus focus on the fact that a set of $6$ points can never be shattered by infinite strips.

First assume that the points are not in convex position. In that case, there must exist a point $p$ and three other points $p_1$, $p_2$, and $p_3$, such that $p$ lies in the convex hull of $p_1$, $p_2$, and $p_3$. Now consider an infinite strip that contains $p_1$, $p_2$, and $p_3$. Since the infinite strip is convex, it must also contain $p$. Hence, the set $\{p_1, p_2, p_3\}$ is not in the range space and the set of points is not shattered (see Fig.~\ref{subfig:VC-dim-nocon} for an example).

\begin{figure}
    \begin{subfigure}[t]{.49\textwidth}
        \centering
        \includegraphics[page=1]{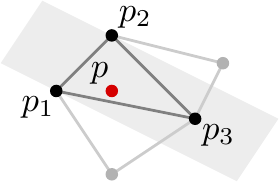}
        \subcaption{Not in convex position}
        \label{subfig:VC-dim-nocon}
    \end{subfigure}
    \begin{subfigure}[t]{.49\textwidth}
        \centering
        \includegraphics[page=2]{figures/VC-dimension-proof.pdf}
        \subcaption{Convex position}
        \label{subfig:VC-dim-con}
    \end{subfigure}
\caption{Examples of two sets of points in which the red point prevents shattering.}
\label{fig:VC-dim-proof}
\end{figure}

Now assume that the points are in convex position and let $p_1, \ldots, p_6$ be the points in clockwise order along the convex hull. We show that the set $\{p_1, p_3, p_5\}$ cannot be the only points contained in an infinite strip (see~Fig.~\ref{subfig:VC-dim-con}). Assume for the sake of contradiction that there exists an infinite strip containing only the points $\{p_1, p_3, p_5\}$. This implies that at least two points of the set $\{p_2, p_4, p_6\}$ must lie on the same side outside of the strip; assume without loss of generality that this holds for $p_2$ and $p_4$. Since the points $\{p_2, p_4, p_6\}$ lie strictly outside of the infinite strip, we can always widen the strip slightly so that $p_1$, $p_3$, and $p_5$ lie strictly in the interior of the strip. But then the bounding line of the strip separating $\{p_1, p_3, p_5\}$ from $\{p_2, p_4\}$ intersects the boundary of the convex hull more than twice (it must intersect all edges in the chain $p_1$--$p_2$--$p_3$--$p_4$--$p_5$), which is a contradiction. Hence, no set of $6$ points can be shattered by infinite strips. Thus we can conclude that the VC-dimension of $(S_1, \mathcal{R}_1)$ is $5$.
\end{proof}

To extend the result of Lemma~\ref{lem:VC-1dim} to $2$ dimensions, we need Sauer's lemma~\cite{sauer1972density}. Let the \emph{growth function} be defined as:
\begin{equation}
\mathcal{G}_d(n) = \sum_{i=0}^d {n \choose i}.
\end{equation}

\begin{lemma}[Sauer's lemma]\label{lem:Sauer}
If $(S, \mathcal{R})$ is a range space of VC-dimension $d$ with $|S| = n$, then $|\mathcal{R}| \leq \mathcal{G}_d(n)$. 
\end{lemma}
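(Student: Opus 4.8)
The plan is to prove Sauer's lemma by induction on $n$, using the standard element-removal decomposition of a set system into a \emph{deletion} and a \emph{link}. Fix an arbitrary element $x \in S$ and set $\mathcal{R}_1 = \{R \setminus \{x\} \mid R \in \mathcal{R}\}$ (the trace of $\mathcal{R}$ on $S \setminus \{x\}$) and $\mathcal{R}_2 = \{R \subseteq S \setminus \{x\} \mid R \in \mathcal{R} \text{ and } R \cup \{x\} \in \mathcal{R}\}$. The first step is the counting identity $|\mathcal{R}| = |\mathcal{R}_1| + |\mathcal{R}_2|$: each $R' \in \mathcal{R}_1$ is the trace on $S \setminus \{x\}$ of either one or two members of $\mathcal{R}$ (namely $R'$ and possibly $R' \cup \{x\}$), and it is the trace of exactly two precisely when $R' \in \mathcal{R}_2$.

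The structural heart of the argument is to bound the VC-dimensions of the two smaller range spaces. For $\mathcal{R}_1$ this is immediate: the trace of $\mathcal{R}_1$ on any $Y \subseteq S \setminus \{x\}$ is contained in the trace of $\mathcal{R}$ on $Y$, so $(S \setminus \{x\}, \mathcal{R}_1)$ has VC-dimension at most $d$, and by induction $|\mathcal{R}_1| \leq \mathcal{G}_d(n-1)$. For $\mathcal{R}_2$ I claim the VC-dimension is at most $d - 1$: if $Y \subseteq S \setminus \{x\}$ were shattered by $\mathcal{R}_2$, then for every $Y' \subseteq Y$ some $R \in \mathcal{R}_2$ has $R \cap Y = Y'$, and by definition of $\mathcal{R}_2$ both $R$ and $R \cup \{x\}$ lie in $\mathcal{R}$; since $x \notin R$, these two sets meet $Y \cup \{x\}$ in $Y'$ and in $Y' \cup \{x\}$ respectively, so $Y \cup \{x\}$ would be shattered by $\mathcal{R}$, forcing $|Y| + 1 \leq d$. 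Hence by induction $|\mathcal{R}_2| \leq \mathcal{G}_{d-1}(n-1)$.

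Combining the pieces gives $|\mathcal{R}| = |\mathcal{R}_1| + |\mathcal{R}_2| \leq \mathcal{G}_d(n-1) + \mathcal{G}_{d-1}(n-1)$, and the proof closes via Pascal's identity $\binom{n-1}{i} + \binom{n-1}{i-1} = \binom{n}{i}$ applied term by term (with $\binom{n-1}{-1} = 0$), which yields exactly $\mathcal{G}_d(n)$. The base cases are routine: for $n = 0$ only the empty set can appear so $|\mathcal{R}| \leq 1 = \mathcal{G}_d(0)$, and for $d = 0$ no singleton is shattered, which forces all members of $\mathcal{R}$ to coincide, so again $|\mathcal{R}| \leq 1 = \mathcal{G}_0(n)$. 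The step I expect to require the most care is the pair consisting of the identity $|\mathcal{R}| = |\mathcal{R}_1| + |\mathcal{R}_2|$ and the VC-dimension drop for $\mathcal{R}_2$; the binomial-coefficient manipulation is pure bookkeeping. If one prefers to bypass the two auxiliary range spaces entirely, an alternative is the down-shifting (compression) proof: repeatedly replace each $R \in \mathcal{R}$ by $R \setminus \{x\}$ whenever that set is not already present, check that this never increases $|\mathcal{R}|$ nor the VC-dimension, and note that once the family is stable it is downward closed, at which point every shattered set is simply a member of the family, so every member has size at most $d$ and $|\mathcal{R}| \leq \mathcal{G}_d(n)$ directly.
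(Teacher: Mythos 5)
Your argument is correct: the element-removal decomposition into $\mathcal{R}_1$ (the trace on $S \setminus \{x\}$) and $\mathcal{R}_2$ (the link at $x$), the counting identity $|\mathcal{R}| = |\mathcal{R}_1| + |\mathcal{R}_2|$, the VC-dimension bounds $d$ and $d-1$ for the two subfamilies, and the wrap-up via Pascal's identity together constitute the standard and complete induction proof of the Sauer--Shelah bound, and your base cases are handled properly. Note, however, that the paper does not prove this statement at all: it is quoted as a known result with a citation to Sauer's original work and is used only as a tool in bounding the VC-dimension of the moving-point range space (Lemma~\ref{lem:VC-2dim}). So there is no proof in the paper to compare against; your contribution is a self-contained derivation of a black-box ingredient, which is perfectly fine but not required by the paper's exposition. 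Your alternative sketch via down-shifting is also a valid route, though, like the induction, it proves more than the paper needs.
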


\begin{figure}[b]
    \centering
    \includegraphics{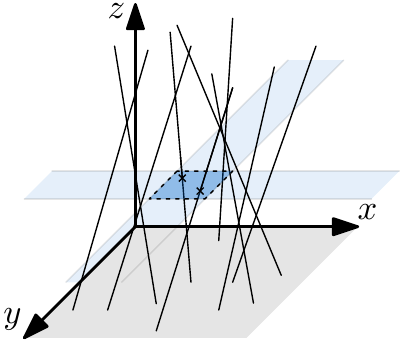}
    \caption{An example of the range space $X = (\mathcal{S}, \mathcal{R})$ plotted over time. The lines denote set $\mathcal{S}$, and the blue square is an example of a range in $\mathcal{R}$.}
    \label{fig:3d-rangespace}
\end{figure}

To find the VC-dimension of range space $X = (\mathcal{S}, \mathcal{R})$, we now consider the linearly moving points in $\mathcal{S}$ plotted over time on the $z$-axis. This gives us a set of $z$-monotone lines in $\Reals^3$ representing $\mathcal{S}$. The ranges $\mathcal{R}$ are now described by $x,y$-aligned squares projected on some $z$-plane. See Fig.~\ref{fig:3d-rangespace} for an example.

\begin{restatable}{lemma}{VCdim}\label{lem:VC-2dim}
Let $X_2 = (S_2, \mathcal{R}_2)$ be a range space where $S_2$ contains a set of $z$-monotone lines in $\Reals^3$, and $\mathcal{R}_2$ contains all subsets of lines in $S_2$ that can be intersected by an axis-aligned square in $\Reals^3$ with constant $z$-coordinate. The VC-dimension of $X_2$ is at most $38$.     
\end{restatable}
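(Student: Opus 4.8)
The plan is to reduce the three-dimensional problem to the one-dimensional problem of Lemma~\ref{lem:VC-1dim} by decomposing an axis-aligned square range into its defining half-space constraints, and then apply Sauer's lemma to bound how the intersections of these constraint families can combine. First I would observe that a $z$-monotone line in $\Reals^3$ projects to a line in the $xz$-plane and a line in the $yz$-plane; moreover, an axis-aligned square with constant $z$-coordinate $z_0$ cuts a $z$-monotone line if and only if that line passes through the square at height $z_0$, which happens exactly when the $xz$-projection lies in the correct $x$-interval at height $z_0$ \emph{and} the $yz$-projection lies in the correct $y$-interval at height $z_0$. An $x$-interval at a fixed height, swept over all heights with a square of varying position, is precisely a vertical strip (a vertical segment in the terminology of Lemma~\ref{lem:VC-1dim}) once we fix the square's $x$-extent and $z$-height; but since the square is a square, its $x$-extent and $y$-extent are linked (equal side length), which I will simply \emph{relax}: a square range is contained in the more general family of axis-aligned rectangle ranges, so it suffices to bound the VC-dimension for rectangles, and every rectangle range is the intersection of an ``$x$-strip'' range from the $xz$-projection and a ``$y$-strip'' range from the $yz$-projection.

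The key steps in order: (1) Define two auxiliary range spaces $X_x = (S_2, \mathcal{R}_x)$ and $X_y = (S_2, \mathcal{R}_y)$, where $\mathcal{R}_x$ consists of subsets of lines whose $xz$-projection meets a given vertical segment in the $xz$-plane, and similarly $\mathcal{R}_y$ for the $yz$-plane. By Lemma~\ref{lem:VC-1dim} (applied to the $xz$- and $yz$-projections, which are $x$-monotone and $y$-monotone lines respectively — really $z$-monotone, matching the hypothesis after relabeling axes), both $X_x$ and $X_y$ have VC-dimension $5$. (2) Observe that every range $R \in \mathcal{R}_2$ satisfies $R = R_x \cap R_y$ for some $R_x \in \mathcal{R}_x$, $R_y \in \mathcal{R}_y$: take $R_x$ to be the $x$-slab of the square and $R_y$ the $y$-slab. (Here I use that a square is a rectangle; this relaxation only enlarges the range family, so the bound carries over.) (3) Fix any candidate shattered set $Y \subseteq S_2$ with $|Y| = k$. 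Then $\mathcal{R}_{2|Y} \subseteq \{A \cap B \mid A \in \mathcal{R}_{x|Y}, B \in \mathcal{R}_{y|Y}\}$, so $|\mathcal{R}_{2|Y}| \leq |\mathcal{R}_{x|Y}| \cdot |\mathcal{R}_{y|Y}| \leq \mathcal{G}_5(k)^2$ by Sauer's lemma. (4) If $Y$ is shattered by $\mathcal{R}_2$, then $|\mathcal{R}_{2|Y}| = 2^k$, so we need $2^k \leq \mathcal{G}_5(k)^2$. (5) Solve this inequality: $\mathcal{G}_5(k) = \sum_{i=0}^{5}\binom{k}{i}$ is a degree-$5$ polynomial roughly $k^5/120$ for large $k$, so $\mathcal{G}_5(k)^2$ grows like $k^{10}$ while $2^k$ grows exponentially; the largest $k$ for which $2^k \leq \mathcal{G}_5(k)^2$ is some explicit constant, and the claim is that this constant is at most $38$.

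The main obstacle will be step (5): verifying that the numerical inequality $2^k > \mathcal{G}_5(k)^2$ holds for all $k \geq 39$ (equivalently, that $2^{38} \leq \mathcal{G}_5(38)^2$ but $2^{39} > \mathcal{G}_5(39)^2$, or at least that $38$ is a valid — not necessarily tight — upper bound). Since $\mathcal{G}_5(k)$ is a concrete polynomial, this is a finite check: one computes $\mathcal{G}_5(k)$ for $k$ around the crossover, squares it, and compares with $2^k$; because $\binom{k}{i} \leq k^i/i!$, a clean sufficient bound is $\mathcal{G}_5(k) \leq \sum_{i=0}^5 k^i/i!$, and it suffices to show $2\bigl(\sum_{i=0}^5 k^i/i!\bigr) < 2^{k/2}$ for $k = 39$, which then holds for all larger $k$ since the right side grows faster. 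A secondary subtlety is making sure the relaxation from squares to rectangles is legitimate: this is immediate because enlarging $\mathcal{R}$ can only increase (never decrease) the VC-dimension, and an upper bound for the larger family is an upper bound for the smaller one. I would also double-check step (2)'s claim that the relevant $x$-slab and $y$-slab ranges genuinely lie in $\mathcal{R}_x$ and $\mathcal{R}_y$ as defined via Lemma~\ref{lem:VC-1dim} — i.e., that ``line meets a vertical strip in a projection plane'' corresponds exactly to ``line meets a vertical segment'' after the point-line duality, which it does since a vertical slab between two heights is covered by vertical segments appropriately, but one must phrase it so that a single range (one strip) suffices rather than a union.
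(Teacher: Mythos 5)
Your proposal is correct and follows essentially the same route as the paper: write each square range as the intersection of two axis-aligned strip ranges (one per projection direction), note that each strip family is exactly the one-dimensional range space of Lemma~\ref{lem:VC-1dim} and hence has VC-dimension $5$, and then bound a shattered set of size $k$ via Sauer's lemma by $2^k \leq (\mathcal{G}_5(k))^2$, solving numerically (the relaxation from squares to rectangles is harmless for an upper bound, just as the paper's strip-intersection containment implicitly allows). One small caveat on your step (5): the ``clean'' bound $\binom{k}{i} \leq k^i/i!$ is too lossy exactly at the threshold, since $\sum_{i=0}^{5} 39^i/i! \approx 8.6 \cdot 10^5 > 2^{19.5} \approx 7.4 \cdot 10^5$, so to land on $38$ you should use the exact values $\mathcal{G}_5(38) = 584935$ and $\mathcal{G}_5(39) = 667928$, for which $2^{38} \leq \mathcal{G}_5(38)^2$ but $2^{39} > \mathcal{G}_5(39)^2$, as your primary finite-check suggestion already covers.
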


\begin{proof}
Let $X'_2 = (S_2, \mathcal{R}'_2)$ be an alternative range space where $\mathcal{R}'_2$ contains all subsets of lines in $S_2$ that can be intersected by an axis-aligned strip in $\Reals^3$ which has constant $z$-coordinate and extends infinitely along the $x$-axis. In that case, the $x$-coordinates of the lines in $S_2$ are irrelevant, and we can observe that $X'_2$ actually corresponds to a $1$-dimensional range space $X_1$. Thus, $X'_2$ has VC-dimension $5$ by Lemma~\ref{lem:VC-1dim}. Similarly we can create a range space $X''_2 = (S_2, \mathcal{R}''_2)$, which is similar to $X'_2$, but then with strips extending infinitely along the $y$-axis instead of the $x$-axis. Again, we can conclude that $X''_2$ has VC-dimension $5$ by Lemma~\ref{lem:VC-1dim}. Now note that every range in $\mathcal{R}_2$ can be obtained by taking the intersection of a range in $\mathcal{R}'_2$ and a range in $\mathcal{R}''_2$ (every square is the intersection of two infinite strips), that is, $\mathcal{R}_2 \subseteq \{R' \cap R''\mid R' \in \mathcal{R}'_2, R'' \in \mathcal{R}''_2\}$. Now assume that $X'2$ has VC-dimension $d$. Then there must be a subset $Y \subseteq S_2$ with $|Y|=d$ that is shattered by $\mathcal{R}_2$, and thus $|\mathcal{R}_2| \geq 2^d$, also when the ground set $S_2$ is restricted to $Y$. We also have that $|\mathcal{R}_2| \leq |\mathcal{R}'_2| |\mathcal{R}''_2| \leq (\mathcal{G}_5(d))^2$ by Lemma~\ref{lem:Sauer}, if we restrict the ground set $S_2$ to $Y$. We thus obtain that $2^d \leq (\mathcal{G}_5(d))^2$. It is easy to verify that this inequality holds for $d = 38$, but not for $d = 39$.
\end{proof}

% \begin{figure}
%     \begin{subfigure}[t]{.49\textwidth}
%         \centering
%         \includegraphics[page=1]{}
%         \subcaption{$X'_2 = (S_2, \mathcal{R}'_2)$}
%         \label{subfig:2D-x}
%     \end{subfigure}
%     \begin{subfigure}[t]{.49\textwidth}
%         \centering
%         \includegraphics[page=2]{figures/2D-rangespaces.pdf}
%         \subcaption{$X''_2 = (S_2, \mathcal{R}''_2)$}
%         \label{subfig:2D-y}
%     \end{subfigure}
% \caption{Two similar 2-dimensional range spaces, projected onto their $z$-plane.}
% \label{fig:2D-rangespaces}
% \end{figure}

From Lemma~\ref{lem:VC-2dim} we can directly conclude that the range space $X = (\mathcal{S}, \mathcal{R})$ has VC-dimension $38$. In the remainder of this paper we will simply refer to the set of linearly moving points $Q$ as the coreset of $\KDE_P$, where the additive error with respect to the volume is $\ec$. We summarize the result in the following lemma.

\begin{restatable}{lemma}{coreset}\label{lem:coreset}
Let $\KDE_P$ be a KDE function on a set of $n$ linearly moving points $P$. For any $\ec > 0$, we can construct a coreset $Q$ of linearly moving points such that, for any time $t$ and any square region $R$, we get that $\left|\frac{|Q \cap R|}{|Q|} - V_R(\KDE_P(t))\right| < \ec$, where $V_R(\KDE_P(t))$ is the volume under $\KDE_P$ at time $t$ restricted to $R$. $Q$ consists of $O(\frac{1}{\ec^2} \log\left(\frac{1}{\ec}\right))$ points and can be constructed in $O(n \poly\left(\frac{\log n}{\ec}\right))$ time.
\end{restatable}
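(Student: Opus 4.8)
The plan is to assemble the machinery of Section~\ref{sec:disc} and keep track of how the additive errors compose. Set $\es = \ec/3$. First I would build, for every $p \in P$, the grid-based sampling $S_p$ of the kernel $K$ translated to $p$, with grid parameter $r = 36/\es$, so that by Lemma~\ref{lem:discrete} each $S_p$ has $O(1/\es^3) = O(1/\ec^3)$ points and satisfies $\bigl|\frac{|S_p \cap R|}{|S_p|} - V_R(K_p)\bigr| \le \es$ for every square region $R$, where $K_p$ denotes the kernel centered at $p$. Because the $S_p$ are all translates of a single point set, they have equal cardinality; hence, writing $\mathcal{S} = \bigcup_{p \in P} S_p$, we get $\frac{|\mathcal{S} \cap R|}{|\mathcal{S}|} = \frac{1}{n}\sum_{p \in P}\frac{|S_p \cap R|}{|S_p|}$, and averaging the per-kernel bounds against $V_R(\KDE_P) = \frac{1}{n}\sum_{p \in P} V_R(K_p)$ shows that $\mathcal{S}$ approximates the volume under $\KDE_P$ with additive error $\es = \ec/3$. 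Since each $S_p$ moves rigidly with $p$, intersecting $S_p$ with a square at time $t$ equals intersecting its time-$0$ copy with a translated square, so this bound holds simultaneously for all square regions and all times $t$.

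Next I would compress $\mathcal{S}$ in two stages using the algorithm of Agarwal~\etal (Lemma~\ref{lem:agarwal-coreset}), both times with parameter $\ec/3$. In the first stage, run the algorithm once on a single $S_p$ to obtain an $(\ec/3)$-approximation $S'_p$ of size $O(\frac{1}{\ec^2}\log\frac{1}{\ec})$, and take a translated copy of it for every point of $P$; the union $\mathcal{S}' = \bigcup_{p \in P} S'_p$ again consists of equicardinal blocks, so by the same averaging argument it is an $(\ec/3)$-approximation of $\mathcal{S}$, for all squares at all times. In the second stage, run the algorithm on $\mathcal{S}'$ to obtain the final coreset $Q$ of size $O(\frac{1}{\ec^2}\log\frac{1}{\ec})$; since $\mathcal{S}'$ is already the concatenation of $n$ precomputed $(\ec/3)$-approximations, only $O(n)$ halving steps are needed in the bottom-up construction. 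Composing the three additive errors by the triangle inequality yields $\bigl|\frac{|Q \cap R|}{|Q|} - V_R(\KDE_P(t))\bigr| < \ec$ for every square region $R$ and every time $t$, which is the claimed property.

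It then remains to verify the hypotheses of Lemma~\ref{lem:agarwal-coreset} and to bound the running time. Its second application requires the range space $X = (\mathcal{S}, \mathcal{R})$ --- ground set the linearly moving points of $\mathcal{S}$, ranges the subsets cut out by an axis-aligned square at some instant in time --- to have constant VC-dimension. Lifting the trajectories to $z$-monotone lines in $\Reals^3$ and viewing each range as an axis-aligned square with constant $z$-coordinate, Lemma~\ref{lem:VC-2dim} gives VC-dimension at most $38$; for the first, single-kernel application the time coordinate drops out entirely (all points of one $S_p$ share a velocity), leaving the static ``points in axis-aligned squares'' range space, also of constant VC-dimension, which is why $S'_p$ is valid at all times. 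With $d \le 38$ a constant and $\mu = \frac{c}{\ec^2}\bigl(\log n \log(\frac{\log n}{\ec})\bigr)^2$, Lemma~\ref{lem:agarwal-coreset} constructs $Q$ from $\mathcal{S}'$ in time $O(|\mathcal{S}'|\,\mu^d)$, which is $O\bigl(n\poly(\frac{\log n}{\ec})\bigr)$ since $|\mathcal{S}'| = O(\frac{n}{\ec^2}\log\frac{1}{\ec})$; the one-off cost of producing $S'_p$ from $O(1/\ec^3)$ points is $\poly(1/\ec)$ and is absorbed, and in the degenerate case $n \le 1/\es^3$ one can instead build $Q$ directly from the $O(1/\ec^6)$ points of $\mathcal{S}$, still within budget. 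This yields the stated size and construction time.

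A remark on where the difficulty lies: all the error bookkeeping above is routine once the ingredients are in place, and the one genuinely non-trivial point is that the range space of \emph{moving} points against square ranges has constant VC-dimension --- exactly what the point--line duality and Sauer-lemma arguments of Lemmas~\ref{lem:VC-1dim} and~\ref{lem:VC-2dim} supply. The secondary subtlety to get right is that the per-kernel samplings and their $\varepsilon$-approximations, computed while ignoring time, stay valid for all $t$, which holds because within a single kernel all sample points translate together.
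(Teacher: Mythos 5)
Your proposal matches the paper's own argument essentially step for step: grid-based samplings with $\es = \ec/3$, an $(\ec/3)$-approximation of a single kernel's sample copied to every point, a second run of the algorithm of Agarwal~\etal on the union, error composition by the triangle inequality, and the VC-dimension bound of Lemma~\ref{lem:VC-2dim} for the moving-point range space to justify the running time. The extra details you supply (the explicit averaging over equicardinal blocks, and the degenerate case $n \le 1/\es^3$) are fine and consistent with what the paper leaves implicit.
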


\subsection{Weight-based quadtree}\label{sec:weight-quad}
The coreset $Q$ of $\KDE_P$ functions as a proxy for the volume under $\KDE_P$ restricted to some square region. We can therefore approximate the volume-based quadtree $T$ of $\KDE_P$ with the \emph{weight-based quadtree} $\WT$ of $Q$, which is defined as follows for any fixed time $t$ and volume threshold $\rho$. The root $r \in \WT$ again corresponds to the whole domain $\dom = [0, D]^2$ of $\KDE_P$. Then, starting from the root, we subdivide a node $v \in \WT$ when the fraction $\frac{|Q \cap R(v)|}{|Q|}$ exceeds $\rho$, and we recursively apply this rule to all newly created child nodes. However, we do not subdivide nodes with $s(v) \leq \sqrt{\rho}$, so that the lower bound on cell size in Lemma~\ref{lem:cellsize} is preserved in $\WT$. We refer to $\frac{|Q \cap R(v)|}{|Q|}$ as the \emph{weight} of cell $v \in \WT$, denoted by $W(v)$. As a result, for every cell (leaf) $v \in \WT$, the weight of $v$ is at most $\rho + \ec$ when $s(v) \leq \sqrt{\rho}$, and at most $\rho$ otherwise. Finally, for every cell $v \in \WT$, we set the value $h(v)$ to $\frac{W(v)}{s(v)^2}$.

By construction, the minimum cell size and maximum depth in Lemma~\ref{lem:cellsize} and Corollary~\ref{cor:quaddepth} are preserved by $\WT$. Since the total weight of all cells in $\WT$ is $1$ by construction, the number of nodes in Lemma~\ref{lem:treecomplexity} also holds for $\WT$. However, the error made by the volume-based quadtree in Lemma~\ref{lem:quadtree-error} does not directly hold for $\WT$, as we need to incorporate the error on the volume under the function $f = \KDE_P$. We therefore give a new bound on the error for $\WT$.

\begin{restatable}{lemma}{weightquaderror}\label{lem:weight-quad-error}
Let $f = \KDE_P$ be a KDE function on a set of linearly moving points $P$ at a fixed time $t$, and let $Q$ be a coreset for $\KDE_P$ with additive error $\ec$. Furthermore, let $\WT$ be the weight-based quadtree on $Q$ with threshold $\rho$ at the same time $t$. Then, for any cell $v \in \WT$, we have that $|f(x, y) - f_{\WT}(x, y)| < \min\left(\frac{2 \sqrt{2}}{3} \lambda s(v), \sqrt[3]{6 \lambda^2 (\rho + 2 \ec)}\right) + \frac{\ec}{s(v)^2}$ for all $(x, y) \in R(v)$, where $\lambda$ is the Lipschitz constant of $f$.     
\end{restatable}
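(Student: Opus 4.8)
The plan is to decompose the error $|f(x,y) - f_{\WT}(x,y)|$ into a \emph{coreset} part and a \emph{quantization} part, and to handle the quantization part by essentially replaying the proof of Lemma~\ref{lem:quadtree-error}. Fix a cell $v \in \WT$ and let $a(v) = V_{R(v)}(f)/s(v)^2$ denote the true average of $f$ over $R(v)$. Since $h(v) = W(v)/s(v)^2$ and $Q$ is a coreset with additive error $\ec$ (Lemma~\ref{lem:coreset}), we have $|W(v) - V_{R(v)}(f)| < \ec$, so $|f_{\WT}(x,y) - a(v)| = |h(v) - a(v)| < \ec/s(v)^2$ for every $(x,y) \in R(v)$. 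By the triangle inequality it then suffices to prove that $|f(x,y) - a(v)| \le \min\left(\frac{2\sqrt{2}}{3}\lambda s(v),\ \sqrt[3]{6\lambda^2(\rho+2\ec)}\right)$ for all $(x,y)\in R(v)$.

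For the first term of the minimum I would observe that the first two paragraphs of the proof of Lemma~\ref{lem:quadtree-error} carry over unchanged: they bound $a(v) - f(x,y)$ via Lemma~\ref{lem:vol-bounds}(c) at an arbitrary point of $R(v)$, and $f(x,y) - a(v)$ via Lemma~\ref{lem:vol-bounds}(a)/(b) at the point of $R(v)$ maximizing $f$ (together with $c - c^3/3 \le \frac{2}{3}$ on $[0,1)$), and neither argument uses any structural property of the quadtree, only the relation between $f$-values and the volume inside a single square. This yields $|f(x,y) - a(v)| \le \frac{2\sqrt{2}}{3}\lambda s(v)$ unconditionally, hence $|f(x,y) - f_{\WT}(x,y)| < \frac{2\sqrt{2}}{3}\lambda s(v) + \ec/s(v)^2$.

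The genuinely new point --- and the main obstacle --- is that a leaf of $\WT$ no longer satisfies $V(v) \le \rho$, because of the minimum-cell-size rule, so I must first recover a replacement bound on $V_{R(v)}(f)$. If $W(v) \le \rho$ then the coreset bound gives $V_{R(v)}(f) < \rho + \ec$; if instead $s(v) \le \sqrt{\rho}$ then, since $f = \KDE_P$ attains values at most $1$, we get $V_{R(v)}(f) \le s(v)^2 \le \rho$. As every leaf satisfies at least one of these two conditions, in all cases $W(v) \le \rho + \ec$ and $V_{R(v)}(f) < \rho + 2\ec$. With this in hand I would replay the last paragraph of the proof of Lemma~\ref{lem:quadtree-error}: assuming $\sqrt[3]{6\lambda^2(\rho+2\ec)} < \frac{2\sqrt{2}}{3}\lambda s(v) < \lambda s(v)$ (otherwise the first term already gives the claimed bound), cubing yields $(\rho+2\ec)/s(v)^2 < \frac{1}{6}\lambda s(v)$, so $a(v) \le V_{R(v)}(f)/s(v)^2 < \frac{1}{6}\lambda s(v)$. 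Combined with the unconditional inequality $f(x,y) \le a(v) + \frac{2\sqrt{2}}{3}\lambda s(v)$ from the previous step, this shows $f(x,y) < \left(\frac{1}{6} + \frac{2\sqrt{2}}{3}\right)\lambda s(v) < \sqrt{2}\lambda s(v)$ throughout $R(v)$; applying Lemma~\ref{lem:vol-bounds}(a) at the maximizing point then gives $f(x,y)^3 \le 6\lambda^2 V_{R(v)}(f) < 6\lambda^2(\rho+2\ec)$ there, so $f(x,y) < \sqrt[3]{6\lambda^2(\rho+2\ec)}$ everywhere on $R(v)$, which also bounds the average $a(v)$. Hence $|f(x,y) - a(v)| < \sqrt[3]{6\lambda^2(\rho+2\ec)}$, and adding the coreset term $\ec/s(v)^2$ completes the bound. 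The only real care needed is the two-case analysis establishing $V_{R(v)}(f) < \rho + 2\ec$ and the verification that the structure-independent parts of Lemma~\ref{lem:quadtree-error} transfer verbatim.
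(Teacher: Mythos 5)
Your proposal is correct and takes essentially the same route as the paper's proof: both cases rest on Lemma~\ref{lem:vol-bounds} applied within the single square $R(v)$, the bound $V_{R(v)}(f) < \rho + 2\ec$, and the final-paragraph argument forcing $f < \sqrt{2}\lambda s(v)$ before invoking part~(a). The only (harmless) difference is organizational: you factor the coreset error out once via the true average $a(v)$ and make explicit the two-case justification of $W(v) \le \rho + \ec$ (using $f \le 1$ for cells with $s(v) \le \sqrt{\rho}$), whereas the paper carries the $\pm\ec$ correction through each inequality and asserts that weight bound when defining $\WT$.
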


\begin{proof}
In this proof we use $V(v)$ to refer to the actual volume under $f$ restricted to $R(v)$.

We first show that $f_{\WT}(x, y) - f(x, y) \leq \frac{2\sqrt{2}}{3} \lambda s(v) + \frac{\ec}{s(v)^2}$. Pick any $(x, y) \in R(v)$ and let $z = f(x, y)$. From Lemma~\ref{lem:vol-bounds} (c) it follows that $V(v) \leq s(v)^2 (z + \frac{2\sqrt{2}}{3} \lambda s(v))$. By construction of $\WT$ we also obtain that $W(v) \leq s(v)^2 (z + \frac{2\sqrt{2}}{3} \lambda s(v)) + \ec$. That directly implies that $h(v) \leq z + \frac{2\sqrt{2}}{3} \lambda s(v) + \frac{\ec}{s(v)^2}$, and hence $f_{\WT}(x, y) - f(x, y) \leq \frac{2\sqrt{2}}{3} \lambda s(v) + \frac{\ec}{s(v)^2}$.

We now show that $f(x, y) - f_{\WT}(x, y) \leq \frac{2\sqrt{2}}{3} \lambda s(v) + \frac{\ec}{s(v)^2}$. We choose coordinates $(x, y) \in R(v)$ such that $f(x, y)$ is maximized and let $z = f(x, y)$. If $z \geq \sqrt{2} \lambda s(v)$, then Lemma~\ref{lem:vol-bounds} (b) states that $V(v) \geq s(v)^2 (z - \frac{2\sqrt{2}}{3} \lambda s(v))$. By construction of $\WT$ we also obtain that $W(v) \geq s(v)^2 (z - \frac{2\sqrt{2}}{3} \lambda s(v)) - \ec$. But then $h(v) \geq z - \frac{2\sqrt{2}}{3} \lambda s(v) - \frac{\ec}{s(v)^2}$ and hence $f(x, y) - f_{\WT}(x, y) \leq \frac{2\sqrt{2}}{3} \lambda s(v) + \frac{\ec}{s(v)^2}$. Otherwise, let $z = c \sqrt{2} \lambda s(v)$ for some constant $c \in [0, 1)$. From Lemma~\ref{lem:vol-bounds} (a) it follows that $V(v) \geq \frac{z^3}{6 \lambda^2} = \frac{\sqrt{2}}{3} c^3 \lambda s(v)^3$. By construction of $\WT$ we also obtain that $W(v) \geq \frac{\sqrt{2}}{3} c^3 \lambda s(v)^3 - \ec$. 
This directly implies that $h(v) \geq \frac{\sqrt{2}}{3} c^3 \lambda s(v) - \frac{\ec}{s(v)^2}$. Finally note that $c - \frac{c^3}{3} \leq \frac{2}{3}$ for all $c \in [0, 1)$, and hence $f(x, y) - f_{\WT}(x, y) \leq z - h(v) \leq \sqrt{2} \lambda s(v) (c - \frac{c^3}{3}) + \frac{\ec}{s(v)^2} \leq \frac{2\sqrt{2}}{3} \lambda s(v) + \frac{\ec}{s(v)^2}$.

In the remainder we can assume that $\sqrt[3]{6 \lambda^2 (\rho + 2 \ec)} < \frac{2\sqrt{2}}{3} \lambda s(v) < \lambda s(v)$, for otherwise the stated bound already holds. We can rewrite this inequality (by cubing and eliminating some factors) as $6 (\rho + 2 \ec) < \lambda s(v)^3$ or $\frac{\rho + \ec}{s(v)^2} < \frac{1}{6} \lambda s(v) - \frac{\ec}{s(v)^2}$. We then get that $h(v) = \frac{W(v)}{s(v)^2} \leq \frac{\rho + \ec}{s(v)^2} < \frac{1}{6} \lambda s(v) - \frac{\ec}{s(v)^2}$. Together with the bounds already proven above, this implies that $f(x, y) < (\frac{1}{6} + \frac{2\sqrt{2}}{3}) \lambda s(v) + \frac{\ec}{s(v)^2} - \frac{\ec}{s(v)^2} < \sqrt{2} \lambda s(v)$ for all $(x, y) \in R(v)$. Now let $(x, y) \in R(v)$ be the coordinates that maximize $f(x, y)$ and let $z = f(x, y)$. From Lemma~\ref{lem:vol-bounds} (a) it follows that $V(v) \geq \frac{z^3}{6 \lambda^2}$. By construction of $\WT$ we have that $V(v) \leq W(v) + \ec \leq \rho + 2\ec$. But then $z^3 \leq 6 \lambda^2 (\rho + 2 \ec)$ or $z \leq \sqrt[3]{6 \lambda^2 (\rho + 2 \ec)}$. We thus obtain that $0 \leq f(x, y) \leq \sqrt[3]{6 \lambda^2 (\rho + 2 \ec)}$ for all $(x, y) \in R(v)$. This directly implies that $V(v) \leq s(v)^2 \sqrt[3]{6 \lambda^2 (\rho + 2 \ec)}$, and by construction of $\WT$, that $W(v) \leq s(v)^2 \sqrt[3]{6 \lambda^2 (\rho + 2 \ec)} + \ec$. Thus we get that $h(v) \leq \sqrt[3]{6 \lambda^2 (\rho + 2 \ec)} + \frac{\ec}{s(v)^2}$, which directly implies that $|f(x, y) - f_{\WT}(x, y)| \leq \sqrt[3]{6 \lambda^2 (\rho + 2 \ec)} + \frac{\ec}{s(v)^2}$.     
\end{proof}

We may now choose parameters such that, for any error $\varepsilon > 0$, we get that $|\KDE_P(x, y) - f_{\WT}(x, y)| < \varepsilon$ for every time $t$. Specifically, we choose $\rho = \frac{\varepsilon^3}{6 \lambda^2 (8 + 2 \varepsilon z^{*})} = \Theta(\varepsilon^3)$ and $\ec = \frac{\varepsilon^4 z^{*}}{48 \lambda^2 (8 + 2 \varepsilon z^{*})} = \Theta(\varepsilon^4)$, where $\lambda = \Theta(1)$ and $z^{*} \leq 1$ are the Lipschitz constant and maximum value of $\KDE_P$, respectively. Observe that $\rho + 2 \ec = \frac{8 \varepsilon^3 + 2 \varepsilon^4 z^{*}}{48 \lambda^2 (8 + 2 \varepsilon z^{*})} = \frac{\varepsilon^3}{48 \lambda^2}$, and hence $\sqrt[3]{6 \lambda^2 (\rho + 2 \ec)} = \frac{\varepsilon}{2}$. Furthermore, by Lemma~\ref{lem:cellsize} we know that $s(v)^2 \geq \frac{1}{4} \rho z^{*}$ for all $v \in \WT$. As $\frac{\ec}{\rho} = \frac{\varepsilon z^{*}}{8}$, we get that $\frac{\ec}{s(v)^2} \leq \frac{\varepsilon}{2}$ for all $v \in \WT$. With these choices of $\rho$ and $\ec$, it now follows from Lemma~\ref{lem:weight-quad-error} that $|\KDE_P(x, y) - f_{\WT}(x, y)| < \varepsilon$ for every time $t$. This also implies that the weight-based quadtree $\WT$ has at most $O(\frac{1}{\varepsilon^3} \log \left(\frac{n}{\varepsilon}\right))$ nodes (Lemma~\ref{lem:treecomplexity}), and that the coreset $Q$ contains $O(\frac{1}{\varepsilon^8} \log\left(\frac{1}{\varepsilon}\right))$ points in total (Lemma~\ref{lem:coreset}). In the remainder of this paper we assume that $Q$ and $\WT$ are constructed with the parameters $\rho$ and $\ec$ chosen above.

\section{KDS for density approximation}\label{sec:kinetic}
In this section we describe a KDS to efficiently maintain the weight-based quadtree $\WT$ on a set of linearly moving points $Q$ over time. By the results of Section~\ref{sec:disc} and Lemma~\ref{lem:persistent-maxima}, keeping track of the local maxima of $f_{\WT}$ is sufficient to track the local maxima of $\KDE_P$ with persistence at least $2 \varepsilon$. We therefore store for every cell $v \in \WT$ whether it is a local maximum or not. In addition, we store a set of pointers $\SN(v)$ in each node $v\in\WT$ (including internal nodes) to all nodes $w\in\WT$ with $\frac{1}{4}s(v) \leq s(w) \leq 4 s(v)$ such that $R(v)$ and $R(w)$ share (a piece of) boundary. It is easy to see that $|\SN(v)| = O(1)$ for all $v \in \WT$. These pointers will be used to efficiently update whether a cell is a local maximum or not.

\subsection{Event Handling}

Assume that a point $q \in Q$ moved from a cell $v \in \WT$ to a cell $u \in \mathcal{N}(v)$. To determine the cell $u$ to which the point $q$ has moved, we can simply use a point location query on $\WT$. Next, we update the weights $W(v)$ and $W(u)$ accordingly. If $W(u) > \rho$ after the update, then we must split the cell $u$ into four cells, and compute the weights of the children of $u$. Possibly we also need to split a child of $u$ recursively, but this can apply to only one child of $u$. Furthermore, let $w$ be the parent of $v$. If the sum of the weights of the children of $w$ become $\leq \rho$, then we need to remove the children of $w$ (we merge $w$) and compute a new weight $W(w)$ for $w$. Note that this is only possible if all children of $w$ are leaves in $\WT$. We may also need to merge the parent of $w$ recursively, so we check this as well. 

If a cell $u \in \WT$ is split, then we need to compute $\SN(w)$ for each new child $w$ of $u$. Note that $\SN(w)$ consists of a subset of $\SN(u)$ and children of nodes in $\SN(u) \cup \{u\}$, which consists of $O(1)$ nodes of $\WT$ in total. We can thus compute $\SN(w)$ (and add pointers to $w$ for nodes in $\SN(w)$) in $O(1)$ time for each child $w$ of $u$. If a cell $v \in \WT$ is removed (due to a merge), then we simply need to remove pointers to $v$ for all nodes in $\SN(v)$. To actually perform a split on a cell $u \in \WT$ we must reassign the points in $Q \cap R(u)$ to the new children of $u$ and update the certificates of these points. Although splits can occur recursively, this can only happen if all points in $u$ must be reassigned to a single child $w$ of $u$. We therefore use the bounding box of points in $Q \cap R(u)$ to ensure that we only need to reassign points (and recompute certificates) once in a string of recursive splits. We can use a similar strategy for merges. 

Finally, we need to update which cells are local maxima, as this can change for each affected node $v \in \WT$ as well as their neighborhoods $\mathcal{N}(v)$. We would like to use Lemma~\ref{lem:cell-local-maxima} to bound the number of neighbors we need to consider, but that lemma holds only for a volume-based quadtree. To adapt the lemma to work with the weight-based quadtree, we first prove and adaptation of Lemma~\ref{lem:cell-local-maxima} for weight-based quadtrees.

\begin{lemma}\label{lem:weight-local-maxima}
Let $f = \KDE_P$ be a KDE function with Lipschitz constant $\lambda$ on a set of linearly moving points $P$ at a fixed time $t$, and let $Q$ be a coreset for $\KDE_P$ with additive error $\ec$. Furthermore, let $\WT$ be the weight-based quadtree on $Q$ with threshold $\rho \geq 8\ec$ at the same time $t$. If a cell $v \in \WT$ satisfies $\frac{\lambda s(v)^3}{\rho} \leq \frac{30}{\sqrt{2}}$, then for all $w \in \mathcal{N}(v)$ with $h(v) \geq h(w)$ it holds that $s(w) \geq \frac{1}{4} s(v)$.
\end{lemma}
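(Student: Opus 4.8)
The plan is to follow the proof of Lemma~\ref{lem:cell-local-maxima} closely, now carrying the coreset error $\ec$ along, and to argue by contradiction. Fix a cell $v \in \WT$ with $\frac{\lambda s(v)^3}{\rho} \leq \frac{30}{\sqrt{2}}$ and a neighbour $w \in \mathcal{N}(v)$ with $h(v) \geq h(w)$, and suppose for contradiction that $s(w) < \frac{1}{4} s(v)$. First dispose of the trivial case: as noted in Section~\ref{sec:weight-quad}, $\WT$ inherits the minimum cell size of Lemma~\ref{lem:cellsize}, so every cell (in particular $w$) has side length more than $\frac{1}{2}\sqrt{\rho}$; hence if $s(v) \leq \sqrt{\rho}$ we already get $s(w) > \frac{1}{2} s(v)$, contradicting the supposition. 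So we may assume $s(v) > \sqrt{\rho}$; since a leaf of side more than $\sqrt{\rho}$ can only have become a leaf because its weight does not exceed $\rho$, we get $W(v) \leq \rho$, i.e.\ $h(v) \leq \frac{\rho}{s(v)^2}$.

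I would then locate a low point of $f$ in the neighbour and lift the estimate to its parent, exactly as in Lemma~\ref{lem:cell-local-maxima}. From $h(w) \leq h(v) \leq \frac{\rho}{s(v)^2}$ and $W(w) = h(w) s(w)^2$, the coreset property gives $V(w) < W(w) + \ec \leq \frac{\rho\, s(w)^2}{s(v)^2} + \ec$, where $V(\cdot)$ denotes the true volume under $f$; hence some $(x,y) \in R(w)$ satisfies $f(x,y) \leq \frac{V(w)}{s(w)^2} \leq \frac{\rho}{s(v)^2} + \frac{\ec}{s(w)^2}$. Let $u$ be the parent of $w$, so that $(x,y) \in R(w) \subseteq R(u)$; Lemma~\ref{lem:vol-bounds}(c) then gives $V(u) \leq s(u)^2\left(f(x,y) + \frac{2\sqrt{2}}{3}\lambda s(u)\right)$. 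Writing $s(u) = \beta s(v)$ (so that $s(w) = \frac{\beta}{2} s(v)$, and the $\beta$-factors in the $\frac{\ec}{s(w)^2}$ term cancel against $s(u)^2$), this becomes $V(u) \leq \beta^2 \rho + 4\ec + \frac{2\sqrt{2}}{3}\lambda \beta^3 s(v)^3$. On the other hand $u$ is internal, hence was subdivided, hence $W(u) > \rho$, and so $V(u) > W(u) - \ec > \rho - \ec$. Comparing the two bounds on $V(u)$, dividing by $\rho$, and using the hypothesis $\rho \geq 8\ec$ to absorb the five stray $\ec$'s into a term $\frac{5}{8}$, one is left with $\frac{\lambda s(v)^3}{\rho} > \frac{3}{16\sqrt{2}}\cdot\frac{3 - 8\beta^2}{\beta^3}$. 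The function $\frac{3-8\beta^2}{\beta^3}$ is decreasing on $(0,\frac{1}{4}]$ and equals $160$ at $\beta = \frac{1}{4}$, so the supposition $s(w) < \frac{1}{4} s(v)$ --- equivalently $\beta \leq \frac{1}{4}$, since cell-size ratios in a quadtree are powers of two --- gives $\frac{\lambda s(v)^3}{\rho} > \frac{3}{16\sqrt{2}}\cdot 160 = \frac{30}{\sqrt{2}}$, contradicting the hypothesis. Hence $\beta \geq \frac{1}{2}$ and $s(w) = \frac{\beta}{2} s(v) \geq \frac{1}{4} s(v)$.

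The only real difficulty relative to Lemma~\ref{lem:cell-local-maxima} is keeping the error terms small enough that the constant $\frac{30}{\sqrt{2}}$ still goes through. The weight-based quadtree replaces the identity that the average of $f$ over a cell equals $\frac{V}{s^2}$ by an estimate with a $\pm\ec$ slack, which enters both as $V(u) > W(u) - \ec$ (the parent being genuinely overweight) and as the $\frac{\ec}{s(w)^2}$ correction to $f(x,y)$; both are controlled once $\rho \geq 8\ec$, but only if we are in the regime $s(v) > \sqrt{\rho}$, so that $W(v) \leq \rho$ exactly rather than $\rho + \ec$ --- and the contradiction hypothesis $s(w) < \frac{1}{4} s(v)$ together with $s(w) > \frac{1}{2}\sqrt{\rho}$ does put us there. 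Everything else --- the cone/pyramid volume estimate of Lemma~\ref{lem:vol-bounds}, and the concluding observation that such a cell therefore has $|\mathcal{N}(v)| \leq 20$ --- transfers unchanged from Section~\ref{sec:volume-quadtree}.
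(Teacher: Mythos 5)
Your proposal is correct and follows essentially the same argument as the paper's proof: the same chain $h(w)\leq h(v)\leq \rho/s(v)^2$, the coreset slack $\pm\ec$ on $V(w)$ and $V(u)$, Lemma~\ref{lem:vol-bounds}(c) applied at the parent $u$ of $w$, and the same inequality $\frac{\lambda s(v)^3}{\rho} > \frac{3}{2\sqrt{2}}\cdot\frac{\frac{3}{8}-\beta^2}{\beta^3}$ forcing $\beta > \frac{1}{4}$. The only (harmless) differences are your contradiction framing and your more explicit justification of $W(v)\leq\rho$ via the $s(v)\leq\sqrt{\rho}$ cutoff, where the paper just dismisses the deepest-level case as trivial.
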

\begin{proof}
In this proof we use $V(v)$ to refer to the actual volume under $f$ restricted to $R(v)$.

Let $v \in \WT$ be a leaf node with $\frac{\lambda s(v)^3}{\rho} \leq \frac{30}{\sqrt{2}}$ and let $w \in \mathcal{N}(v)$ be a neighboring cell of $v$ with $h(v) \geq h(w)$. We assume that $v$ is not at the deepest level of $\WT$ and that $s(w) < s(v)$, as otherwise the result is trivial. Hence, $W(v) \leq \rho$. We get that $h(w) \leq h(v) \leq \frac{\rho}{s(v)^2}$. By construction of $\WT$ we get that $W(w) \leq \frac{s(w)^2 \rho}{s(v)^2}$ and hence $V(w) \leq \frac{s(w)^2 \rho}{s(v)^2} + \ec$. In particular, there must be coordinates $(x, y) \in R(w)$ such that $f(x, y) \leq \frac{\rho}{s(v)^2} + \frac{\ec}{s(w)^2}$. Now consider the parent $u$ of $w$ in $\WT$. As $(x, y) \in R(u)$, we can apply Lemma~\ref{lem:vol-bounds}(c) to obtain that $V(u) \leq s(u)^2 (\frac{\rho}{s(v)^2} + \frac{\ec}{s(w)^2} + \frac{2 \sqrt{2}}{3} \lambda s(u))$. We now write $s(u) = \beta s(v)$ for some constant $\beta \in (0, 1]$. Since $u$ is not a leaf in $\WT$, we get that $V(u) \geq W(u) - \ec > \rho - \ec$. We thus obtain the following inequality: 
% (using $\alpha \vcentcolon= \frac{\lambda s(v)^3}{\rho}$):
    \begin{alignat*}{2}
        && \beta^2 s(v)^2 (\frac{\rho}{s(v)^2} + \frac{4 \ec}{\beta^2 s(v)^2} + \frac{2 \sqrt{2}}{3} \lambda \beta s(v))
        &> \rho - \ec\\
        &\Rightarrow\qquad 
        & \beta^2 \rho + \frac{2 \sqrt{2}}{3} \lambda \beta^3 s(v)^3
        &> \rho - 5\ec\\
        &\Rightarrow\qquad 
        & \beta^2 \rho + \frac{2 \sqrt{2}}{3} \lambda \beta^3 s(v)^3
        &> \frac{3}{8} \rho\\        
        &\Rightarrow\qquad
        & \frac{2 \sqrt{2}}{3} \lambda \beta^3 s(v)^3
        &> (\frac{3}{8} - \beta^2)\rho \\
        &\Rightarrow\qquad
        & \frac{\lambda s(v)^3}{\rho}
        &> \frac{3}{2\sqrt{2}}\frac{\frac{3}{8} - \beta^2}{\beta^3} \\
    \end{alignat*}
We thus obtain that $\frac{3}{2\sqrt{2}}\frac{\frac{3}{8} - \beta^2}{\beta^3} < \frac{30}{\sqrt{2}}$ or $\frac{\frac{3}{8} - \beta^2}{\beta^3} < 20$. It is easy to verify that this inequality only holds for $\beta > \frac{1}{4}$. We can then directly conclude that $s(w) = \frac{1}{2}s(u) = \frac{1}{2} \beta s(v) > \frac{1}{8} s(v)$. As the ratios of sizes between cells in $T$ must always be a power of $2$, we conclude that $s(w) \geq \frac{1}{4} s(v)$.
\end{proof}

Now, we can use Lemma~\ref{lem:weight-local-maxima} to prove the following lemma:

\begin{restatable}{lemma}{localmaxima}\label{lem:local-maxima-bounded}
Let $f = \KDE_P$ be a KDE function with Lipschitz constant $\lambda$ and maximum $z^{*} \leq 1$ on a set of linearly moving points $P$ at a fixed time $t$. For any constant $0 < \varepsilon \leq 1$, let $Q$ be a coreset for $\KDE_P$ with additive error $\ec = \frac{\varepsilon^4 z^{*}}{48 \lambda^2 (8 + 2 \varepsilon z^{*})}$. Furthermore, let $\WT$ be the weight-based quadtree on $Q$ with threshold $\rho = \frac{\varepsilon^3}{6 \lambda^2 (8 + 2 \varepsilon z^{*})}$ at the same time $t$. If a cell $v \in \WT$ has $h(v) \geq \varepsilon$, then for all $w \in \mathcal{N}(v)$ with $h(v) \geq h(w)$ it holds that $s(w) \geq \frac{1}{4} s(v)$.
\end{restatable}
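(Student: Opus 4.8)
The plan is to derive the statement directly from Lemma~\ref{lem:weight-local-maxima}, which already yields exactly the desired conclusion ($s(w) \geq \tfrac14 s(v)$ for every $w \in \mathcal{N}(v)$ with $h(v) \geq h(w)$) provided that two hypotheses hold: (i) the threshold satisfies $\rho \geq 8\ec$, and (ii) the cell $v$ obeys $\frac{\lambda s(v)^3}{\rho} \leq \frac{30}{\sqrt{2}}$. So the whole argument reduces to checking (i) and (ii) for the specific choices $\rho = \frac{\varepsilon^3}{6\lambda^2(8 + 2\varepsilon z^{*})}$ and $\ec = \frac{\varepsilon^4 z^{*}}{48\lambda^2(8 + 2\varepsilon z^{*})}$, using only the hypothesis $h(v) \geq \varepsilon$ together with $0 < \varepsilon \leq 1$ and $z^{*} \leq 1$.

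For (i), comparing the two formulas gives $\ec = \frac{\varepsilon z^{*}}{8}\,\rho$, so $8\ec = \varepsilon z^{*}\rho \leq \rho$ since $\varepsilon z^{*} \leq 1$; hence $\rho \geq 8\ec$. For (ii), I would first turn $h(v) \geq \varepsilon$ into an upper bound on $s(v)$. By construction of the weight-based quadtree every leaf $v \in \WT$ satisfies $W(v) = s(v)^2 h(v) \leq \rho + \ec$: when $s(v) > \sqrt{\rho}$ this is because $v$ was not subdivided, and when $s(v) \leq \sqrt{\rho}$ it holds because the true volume of $f$ over $R(v)$ is at most $s(v)^2 z^{*} \leq \rho$ and $Q$ has additive error $\ec$. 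Together with $h(v) \geq \varepsilon$ this gives $s(v)^2 \leq (\rho + \ec)/\varepsilon$, hence $s(v)^3 \leq ((\rho+\ec)/\varepsilon)^{3/2}$.

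Substituting this into $\frac{\lambda s(v)^3}{\rho}$, writing $\rho + \ec = (1 + \tfrac{\varepsilon z^{*}}{8})\rho$, and plugging in the explicit value of $\rho$ (so that $\sqrt{\rho} = \varepsilon^{3/2}/(\lambda\sqrt{6(8 + 2\varepsilon z^{*})})$), a short cancellation yields
\[
\frac{\lambda s(v)^3}{\rho} \;\leq\; \frac{\lambda\,(\rho+\ec)^{3/2}}{\rho\,\varepsilon^{3/2}} \;=\; \frac{(1 + \tfrac{\varepsilon z^{*}}{8})^{3/2}}{\sqrt{6(8 + 2\varepsilon z^{*})}} \;\leq\; \frac{(9/8)^{3/2}}{\sqrt{48}} \;<\; 1 \;<\; \frac{30}{\sqrt{2}},
\]
where the penultimate inequality uses $\varepsilon z^{*} \leq 1$ and $8 + 2\varepsilon z^{*} \geq 8$. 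So condition (ii) holds with a large margin.

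Having verified both hypotheses, Lemma~\ref{lem:weight-local-maxima} immediately gives $s(w) \geq \tfrac14 s(v)$ for every $w \in \mathcal{N}(v)$ with $h(v) \geq h(w)$, which is the claim. I expect no genuine obstacle here beyond bookkeeping: the two points requiring a little care are justifying $W(v) \leq \rho + \ec$ uniformly over all leaves (the $s(v) \leq \sqrt{\rho}$ case leaning on $z^{*} \leq 1$ and the coreset error from Lemma~\ref{lem:coreset}), and carrying the explicit constants cleanly through the substitution so that the final numeric inequality is manifest. No idea beyond Lemma~\ref{lem:weight-local-maxima} is needed.
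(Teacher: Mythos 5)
Your proposal is correct and takes essentially the same route as the paper: reduce to Lemma~\ref{lem:weight-local-maxima} by checking $\rho \geq 8\ec$ (via $\ec = \tfrac{\varepsilon z^{*}}{8}\rho$) and the size condition $\frac{\lambda s(v)^3}{\rho} \leq \frac{30}{\sqrt{2}}$, then apply it. The only cosmetic difference is directional: you obtain the size condition directly from $h(v) \geq \varepsilon$ through the uniform leaf bound $W(v) \leq \rho + \ec$, whereas the paper argues by contraposition (if $\frac{\lambda s(v)^3}{\rho} > \frac{30}{\sqrt{2}}$ then $s(v) > \frac{\varepsilon}{\sqrt{2}\lambda}$, forcing $h(v) \leq \frac{\varepsilon}{24} < \varepsilon$); both calculations are sound.
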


\begin{proof}
With the chosen values for $\rho$ and $\ec$ we have that $\rho = \frac{8 \ec}{z^{*} \varepsilon} \geq 8 \ec$ (for $\varepsilon, z^{*} \leq 1$). We can thus apply Lemma~\ref{lem:weight-local-maxima}. Now consider a cell $v \in \WT$ for which $\frac{\lambda s(v)^3}{\rho} > \frac{30}{\sqrt{2}}$ (otherwise we are done). Note that $\frac{\varepsilon^3}{60 \lambda^2} \leq \rho \leq \frac{\varepsilon^3}{48 \lambda^2}$. We thus get that $\frac{30}{\sqrt{2}} < \frac{\lambda s(v)^3}{\rho} < \frac{60 \lambda^3 s(v)^3}{\varepsilon^3}$ or $\frac{1}{\sqrt{2}} < \frac{\lambda s(v)}{\varepsilon}$, which implies that $s(v) > \frac{\varepsilon}{\sqrt{2} \lambda}$. Since $W(v) \leq \rho$ ($v$ is clearly not at the maximum depth of $\WT$), we get that $h(v) = \frac{W(v)}{s(v)^2} \leq \frac{\rho}{s(v)^2} \leq \frac{2 \lambda^2}{\varepsilon^2} \frac{\varepsilon^3}{48 \lambda^2} \leq \frac{1}{24} \varepsilon$. We thus conclude that the stated property must hold if $h(v) \geq \varepsilon$. 
\end{proof}

If a cell $v \in \WT$ is a local maximum of $f_{\WT}$ and has $h(v) < \varepsilon$, then any corresponding local maximum $(x, y)$ in $f = \KDE_P$ would have value $f(x, y) < 2 \varepsilon$ (since we have additive error $\varepsilon$), and hence this local maximum would not have persistence $\geq 2 \varepsilon$. Thus, by Lemma~\ref{lem:local-maxima-bounded}, a cell $v \in \WT$ can be a persistent local maximum only if its neighbors are of size at least $\frac{1}{4} s(v)$. Thus, to check if a node $v \in \WT$ is a local maximum, we can first check if $\SN(v)$ contains an internal (non-leaf) node $w$ with $s(w) = \frac{1}{4} s(v)$, in which case $v$ is not a persistent local maximum. Otherwise, we mark $v$ as a local maximum if $h(v) \geq h(w)$ for all $w \in \SN(v) \cap \mathcal{N}(v)$. To correctly keep track of local maxima, we need to perform this check on all nodes $v \in \WT$ affected by the event, as well as the nodes in $\SN(v) \cap \mathcal{N}(v)$ of an affected node $v$. 

\subsection{Analysis}

We analyze the kinetic data structure using the quality criteria as described in Section~\ref{sec:prelim}. However, as the KDS is built on a coreset $Q$ rather than the original point set $P$, this analysis is non-standard. This especially affects the efficiency of the KDS, as it is no longer clear what should be considered as an external event and what should be considered as the worst-case input (is the input $P$ or $Q$?). We therefore omit an analysis on the efficiency of this KDS and only consider the total number of events. On the other hand, we do include an analysis on flight-plan updates, as a change in trajectory of a point $p \in P$ requires the coreset $Q$ (the input of the KDS) to be updated, which is thus a very relevant operation to consider. 

\subparagraph{Responsiveness.} First, the new cell $u$ of a point $q \in Q$ can be computed in $O(d(\WT))$ time, where $d(\WT)$ is the depth of $\WT$. The number of nodes affected by an event is at most $O(d(\WT))$ as well (due to recursive splitting/merging). Since $|\SN(v)| = O(1)$ for all $v \in \WT$, we can check for all affected nodes (and neighbors) if they are a local maximum in $O(d(\WT))$ time in total. Similarly, we can update all pointers in $\SN(v)$ for all affected nodes $v \in \WT$ (and neighbors) in $O(d(\WT))$ time in total. By Corollary~\ref{cor:quaddepth} we get that $d(\WT) = O(\log\left(\frac{D}{\varepsilon}\right))$. Finally, we need to reassign points, recompute weights, and recompute certificates after splits/merges. We do this only once per event for all points in the corresponding cells. A single cell $u \in \WT$ can contain at most $(\rho + \ec) |Q| = O(\frac{1}{\varepsilon^5} \log(\frac{1}{\varepsilon}))$ points. Including updating the event queue, each point can be handled in $O(1) + O(\log |Q|)$ time. Thus, the total time to handle an event is $O(\frac{1}{\varepsilon^5} \log^2(\frac{1}{\varepsilon}) + \log\left(\frac{D}{\varepsilon}\right))$. 

\subparagraph{Locality and compactness.} Each point $q \in Q$ is involved in only $1$ certificate at a time, so the locality is $O(1)$ and the compactness is $O(|Q|) = O(\frac{1}{\varepsilon^8}\log(\frac{1}{\varepsilon}))$.

\subparagraph{Number of events.} Consider a grid over the domain $\dom = [0, D]^2$ with grid cells of size $\frac{1}{2} \sqrt{\rho} = O(\varepsilon^{\frac{3}{2}})$. As cells in $\WT$ cannot be smaller than $\frac{1}{2} \sqrt{\rho}$ by Lemma~\ref{lem:cellsize}, a point $q \in Q$ may trigger an event (that is, switch cells) at every grid line it crosses, in the worst case. Since $q$ follows linear motion, this means that $q$ cannot trigger more than $O(\frac{D}{\varepsilon^{\frac{3}{2}}})$ events. This results in a total number of events of $O(\frac{|Q| D}{\varepsilon^{\frac{3}{2}}}) = O(D\poly(\frac{1}{\varepsilon}))$. 

\subparagraph{Flight plan updates.}

We use the dynamic data structure by Agarwal~\etal~\cite{approx-rangespace} to facilitate maintenance of our coreset during flight plan updates. We briefly review how this dynamic data structure works. 

The data structure maintains a set of $O(\log n)$ perfectly balanced trees of different ranks, where a tree of rank $i$ contains $2^i$ points. In these trees, each internal node stores an $\varepsilon$-approximation of all the points stored in the subtree of the node. To insert a new point, we make a new tree of rank $0$ with only that point in it. To remove a point, we break the tree containing the point into $O(\log n)$ trees. Afterwards, we merge trees of the same rank $i$ into a new tree of rank $i+1$, and we perform merging and halving steps (as described above) to obtain an $\varepsilon$-approximation for the new root. We repeat merging trees until every tree has a unique rank. Finally, the $\varepsilon$-approximation of the full set of points can be obtained by computing an $\varepsilon$-approximation of the union of $\varepsilon$-approximations of all tree roots. This result is summarized in Theorem~\ref{thm:eps-approx-rangespace}.

\dynamicapprox*

In our setting, a single point $p \in P$ corresponds to multiple points in $Q$, and specifically also in the ground set $\mathcal{S}$. Thus, we would have to delete and insert multiple points to handle the change of trajectory of a single point. However, we can see the $\varepsilon$-approximation of the volume under the kernel $K$ as a single element in the dynamic data structure in Theorem~\ref{thm:eps-approx-rangespace}, as we can simply copy it for every point $p \in P$. Thus, we can use the data structure with $n$ being the number of points in $P$ rather than being the number of points in $\mathcal{S}$, and every flight plan update to a single point $p \in P$ can be handled by a single insertion and deletion in the data structure. We then obtain a new coreset $Q'$ that we can use to compute an updated weight-based quadtree. Using this data structure, we can compute a new coreset $Q'$ in $O\left(\poly\left(\frac{\log n}{\varepsilon}\right)\right)$ time. Afterwards, we must update $\WT$ (and all associated certificates) to use $Q'$ instead of $Q$. In the worst case we may need to rebuild $\WT$ completely (as $Q'$ and $Q$ might be completely different), but the construction time of $\WT$ is bounded by $O(|\WT| + |Q|) = O\left(\poly\left(\frac{\log n}{\varepsilon}\right)\right)$.     

\section{Discussion}\label{sec:discussion}

We presented a KDS that efficiently tracks persistent local maxima of a KDE on a set of linearly moving points $P$. To develop this KDS, we first showed how to approximate (within a given error bound) a density function via a volume-based quadtree. We then proved that we can compute a coreset of moving points which approximates the volume under a density function. A weight-based quadtree on this coreset in turn approximates the volume-based quadtree on the density function. 
For any $\varepsilon > 0$, we can compute this coreset of size $O(\poly(\frac{1}{\varepsilon}))$ in $O\left(n \poly\left(\frac{\log n}{\varepsilon}\right)\right)$ time, where $n$ is the number of points in $P$ and $\varepsilon$ is the error bound between the weight-based quadtree and the density function. 

Various bounds on the quadtree complexity and the KDS quality measures depend on the size of the domain $D$. As we assume that our input points represent a single group, it makes sense to assume that the kernel functions of any point (its region of influence) must overlap with the kernel function of at least one other point. Since we scale the input such that the kernel width is $\sigma = 1$, this directly implies that $D = O(n)$ for a static set of points, although it is likely much smaller. However, when points move in a single direction for a long time (say, when a herd is migrating), they may easily leave a domain of that size. To address this problem without blowing up the size of the domain, we can move the domain itself along a piecewise-linear trajectory. A change of direction of the domain directly changes the trajectories of all points, and  all events in the KDS must be recomputed. The coreset, however, does not need to change during such an event. We can limit the number of domain flight plan changes by using a slightly larger domain than needed at any point in time.

We believe that approximating the density surface we want to maintain via a suitable coreset of moving points is a promising direction also in practice. Below we briefly sketch the necessary adaptations that we foresee. 
For the coreset to exist, the range space formed by the trajectories of the (samples around) the input objects and a set of square regions needs to have bounded VC-dimension. We proved an upper bound on this VC-dimension in the case that all trajectories are linear. Real-world animal trajectories are certainly not linear. However, since animals cannot move at arbitrary speeds and subgroups can often be observed to stay together, we still expect the corresponding range space to have bounded VC-dimension. A formal proof seems out of reach for more than very restrictive motion models, but bounds might be deduced from experimental data. 

The actual computation of the coreset via the algorithm of Agarwal~\etal~\cite{approx-rangespace} is impossible if the trajectories are not known ahead of time. However, random sampling (that is, sample a point $p \in P$, and then sample from its kernel) can be expected to result in a coreset of good size and quality in practice (the worst-case bounds on the size of the coreset which we proved are unlikely to be necessary in practice). Since we generally do not know the trajectories of the animals, but we do have bounds on their maximum speeds, a black-box KDS could be used to maintain such a random sampling coreset efficiently.

Our theoretical results inform the direction of our future engineering efforts in two ways. First of all, we now know that we can approximate well with a coreset whose size depends only on the desired approximation factor and not on the input size. Second, we know how to sample to find such a coreset, by essentially constructing randomly shifted copies of the input points.

%%
%% Bibliography
%%

%% Please use bibtex, 

\bibliography{kineticKDE}

\appendix

\end{document}